\newtheorem{lemma}{Lemma}
\newtheorem{remark}{Remark}
\newtheorem{definition}{Definition}
\newtheorem{theorem}{Theorem}
\newtheorem{example}{Example}
\newcommand{\SNR}{\text{SNR}}
\title{Interference Aligned Space-Time Transmission with Diversity for the $2 \times 2$ X-Network}
\author{
\authorblockN{Abhinav Ganesan and K. Pavan Srinath}
\thanks{Abhinav Ganesan is with the Institute of Network Coding, The Chinese University of Hong Kong (e-mail: abhinav@inc.cuhk.edu.hk).
K. Pavan Srinath is with the Department of Engineering, University of Cambridge, UK (e-mail: pk423@cam.ac.uk).

A preliminary version of a part of this paper was presented at the 2014 IEEE International Symposium on Information Theory (ISIT). Another part of a content of this paper has been submitted for possible publication to the 2015 IEEE ISIT.}
}
\begin{document}

\maketitle
\thispagestyle{empty}	

\begin{abstract}
The sum degrees of freedom (DoF) of the two-transmitter, two-receiver multiple-input multiple-output (MIMO) X-Network ($2 \times 2$ MIMO X-Network) with $M$ antennas at each node is known to be $\frac{4M}{3}$. Transmission schemes which couple local channel-state-information-at-the-transmitter (CSIT) based precoding with space-time block coding to achieve the sum-DoF of this network are known specifically for $M=2,4$. These schemes have been proven to guarantee a diversity gain of $M$ when a finite-sized input constellation is employed. In this paper, an explicit transmission scheme that achieves the $\frac{4M}{3}$ sum-DoF of the $2 \times 2$ X-Network for arbitrary $M$ is presented. The proposed scheme needs only local CSIT unlike the Jafar-Shamai scheme which requires the availability of global CSIT in order to achieve the $\frac{4M}{3}$ sum-DoF. Further, it is shown analytically that the proposed scheme guarantees a diversity gain of $M+1$ when finite-sized input constellations are employed. 
\end{abstract}	
\begin{keywords}
Interference Alignment, X-Channels, X-Networks, Diversity, Space-time Block Codes, Degrees of Freedom.
\end{keywords}

\section{Introduction} \label{sec1}
The advent of smart phones has led to an explosion in mobile data demand. But a limited spectrum calls for a better spectrum management that incorporates techniques beyond conventional approaches like orthogonalization of spectrum. A further increase in the number of mobile users and data demand means that cell edge users are susceptible to interference from the neighbouring base stations and vice-versa. These issues have instigated research on better transmission techniques in interference networks, with information-theoretic rate tuples often used as the metric for designing better schemes. Since the capacity of interference networks is unknown in general, \textit{degrees of freedom} (DoF) \cite{ZhT} is the widely targeted metric due to its relative ease of characterization. The sum-DoF of a Gaussian network is said to be $d$ if its sum-capacity (in bits per channel use) can be approximated as $C(\SNR)=d \log_2 \SNR + o(\log_2 \SNR)$.

Availability of channel-state-information at the transmitters (CSIT) is an important assumption in the characterization of the approximate capacity of Gaussian interference networks. Availability of perfect global CSIT\footnote{Global CSIT means that all the channel gains of the Gaussian network are available a priori at all the transmitters.} often enables one to design precoders that cast interference onto subspaces independent of the desired signal space at the receivers. This technique, termed interference alignment (IA), was first used implicitly in \cite{MMK_X_Ch_TechRep2008,MMK_X_Ch_TIT2008}, and explicitly appeared in \cite{Jaf_X_Ch_Arxiv2006, JaS_X_Ch_2008} in the context of $2 \times 2$ multiple-input multiple-output (MIMO) X-Networks. A $K\times J$ X-Network is a Gaussian interference network with $K$ transmitters and $J$ receivers and a total of $KJ$ independent messages meant to be sent over the network, one from every transmitter to every receiver. A $2 \times 2$ X-Network with $M$ antennas at each node is referred to as the $(2\times 2, M)$ X-Network. A lower bound on the sum-DoF was shown to be $\lfloor \frac{4M}{3} \rfloor$ for such a network in \cite{MMK_X_Ch_TIT2008}, and it was proven in \cite{JaS_X_Ch_2008} that the sum-DoF equals $\frac{4M}{3}$, achieved using an IA scheme. All the aforementioned works assume the availability of perfect global CSIT. 

The concept of DoF assumes the use of a codebook with unconstrained alphabet size as well as unlimited peak power, but with an average power constraint. The channel is assumed to be static during the transmission of an entire codeword. Further, information-theoretic rate definitions also assume the usage of unlimited coding length. Clearly, all these assumptions are infeasible in practice. In practical communication, the coding length and the codebook size are constrained by factors such as delay requirement and computational complexity. Moreover, the practically used input constellations like QAM and PSK have limited peak power. So, these issues\footnote{In the context of multiuser communication, these issues have motivated the study of the effects of constellation constraints on information-theoretically achievable rates in the two-user multiple access channel \cite{HaR_MAC_TIT2011} and the Gaussian Interference Channel \cite{AbR_2GIC_TWC2012, AbR_KGIC_TWC2014}. However, these works do not take into account limited coding length.} have motivated the research on high reliability communication in MIMO systems under practical constraints like limited coding length, constrained alphabet size, and limited peak power, thus leading to the development of space-time block codes (STBCs) for the single user MIMO systems \cite{TSC}. The theory of STBCs makes the assumption that the channel is constant during the transmission of an entire codeword block but changes independently after every codeword transmission, i.e., the channel is a {\it block fading} one. A metric of significant interest in the design of STBCs is the {\it diversity gain} which indicates the nature of the fall in error probability with SNR. Most of the literature on STBCs is on {\it linear STBCs} \cite{HaH2002} (see Definition \ref{def1} and Definition \ref{def2} in Section \ref{sec2a} for a formal definition of ``STBC'' and ``linear STBC'', respectively) primarily due to the ease of symbol encoding and, to an extent, decoding (using the sphere decoder \cite{ViB}). Associated with such linear STBCs is the notion of symbol rate which is the number of linearly and statistically independent complex symbols transmitted per channel use (see Definition \ref{def3} in Section \ref{sec2a} for a formal definition of ``STBC rate''). It is known that for a single user MIMO system with $M$ transmit antennas and $M$ receive antennas, the maximum possible STBC rate (in complex symbols per channel use) is $M$, which equals the DoF\footnote{For a general $M \times N$ MIMO system, i.e., a MIMO system with $M$ transmit antennas and $N$ receive antennas, the DoF is $\min(M,N)$. For the case where $N < M$, it is currently not known if the best STBC with a rate of $M$ complex symbols per channel use (cspcu) offers any advantage over the best STBC in the comparable class with a rate of $N$ cspcu.} (DoF is the maximum achievable multiplexing gain \cite{ZhT}) of the single user MIMO system.

The above notion of rate (henceforth in this paper, ``rate'' refers to the rate of the STBC unless otherwise mentioned) can be extended to the multiuser setting as follows. Analogous to rate (in a single user MIMO system using STBCs) is the ``sum-rate'' of a linear transmission scheme\footnote{A linear transmission scheme is one where the vectorized version of the symbols received across all the antennas and time instants spanning the codeword length can be expressed as a linear combination of the statistically independent input symbols. In a single user MIMO system, a linear transmission scheme is equivalent to a linear STBC.} in a Gaussian interference network. This sum-rate is a measure of the total number of linearly and statistically independent complex symbols transmitted per channel use (see Definition \ref{def8} in Section \ref{sec2a} for a formal definition of the sum-rate) and is related to the number of independent complex symbols that can be recovered at the receiver by simple zero-forcing. Note that the definition of sum-DoF applies to non-linear transmission schemes while the sum-rate applies strictly to linear schemes with limited coding length and with finite input constellation. However, it is trivially true that the sum-rate cannot exceed the sum-DoF. Therefore, for the $(2\times 2, M)$ X-Network, the maximum sum-rate is $\frac{4M}{3}$ cspcu, achieved by an IA scheme that is linear \cite{JaS_X_Ch_2008}. The primary goal of this paper is to look for linear transmission schemes for the $(2\times 2, M)$ X-Network that achieve the maximum sum-rate along with a non-trivial guaranteed diversity gain when finite and fixed input constellations are employed.

\subsection{Prior Works on Diversity Gain in Interference Networks}
A linear transmission scheme (Definition \ref{def7}, Section \ref{sec2a}) based on the quasi-orthogonal STBC \cite{JafKh_TCOM2001} was proposed for the $(2\times 2)$ X-Network for different configurations of the number of transmit and receive antennas in \cite{LiJ_STBC_X_Ch_TSP2012}. There are several drawbacks with this transmission scheme, though full transmit and receive diversity gains are guaranteed. The transmission scheme requires at least six transmit antennas, and has a sum-rate of 4 cspcu, which does not scale with the number of transmit and receive antennas. Further, the work aims for orthogonality of the desired signals from the two transmitters to a single receiver as well as orthogonality between the desired signal sub-space and the interference sub-space, with the assumption of global CSIT. However, such an orthogonality can easily be achieved without global CSIT using the time division multiple access scheme (TDMA). Another linear transmission scheme achieving an (asymptotic) sum-rate of four cspcu was proposed in \cite{SZX_TCOM2013} for the $(2\times 2)$ X-Network equipped with $M$ transmit and $N$ receive antennas, without the assumption of channel-state-information at any of the transmitters. Clearly, the sum-rate does not scale with the number of transmit or receive antennas, though full transmit and receive diversity gains are guaranteed. Moreover, better sum-rate can be achieved with TDMA along with full transmit and receive diversity gains. Nevertheless, TDMA cannot achieve the maximum sum-rate of $\frac{4M}{3}$ cspcu for the $(2 \times 2,M)$ X-Network.

Linear transmission schemes for the $(2\times 2, 2)$ X-Network and the $(2\times 2, 4)$ X-Network were proposed in \cite{LiJ} and \cite{AbR_4x_X-Ch_TIT2014}. The first linear transmission scheme with a guaranteed diversity gain of 2 with fixed finite input constellations for the $(2\times 2, 2)$ X-Network that achieves the maximum sum-rate of $\frac{8}{3}$ cspcu was proposed in \cite{LJJ,LiJ}. This transmission scheme couples the Alamouti STBC \cite{Ala} with channel-dependent precoding and achieves IA. The same (structure-wise) IA precoding matrices were coupled with the Srinath-Rajan STBC \cite{SrR1} to guarantee a diversity gain of 4 with fixed finite input constellations at the maximum sum-rate of $\frac{16}{3}$ cspcu for the $(2\times 2, 4)$ X-Network \cite{AbR_4x_X-Ch_TIT2014}. In general, STBC designs for single user MIMO systems assume only the availability of perfect channel-state-information at the receivers (CSIR) but not CSIT. However, since the channel matrices are random, CSIT in the $(2\times 2, 2)$ X-Network is inevitable in order to achieve IA, and hence the maximum sum-rate transmission. Moreover, the assumption of CSIT is not an impractical one, since a few state-of-the-art wireless systems support CSIT (for example, the Wi-Fi 802.11ac standard \cite{Aruba_2014}). The precoders of  \cite{LiJ}, which we call the LiJ precoders, assume the availability of local CSIT, i.e., each transmitter is aware of only its own channel matrices to both the receivers, and global CSIR, i.e., all the channel matrices are known to all the receivers. This is in contrast to the assumption of global CSIT (i.e., all the channel matrices are known to all the transmitters) in \cite{JaS_X_Ch_2008} to achieve IA. 

Furthermore, the transmission schemes in \cite{LiJ,AbR_4x_X-Ch_TIT2014} also achieve the $\frac{4M}{3}$ sum-DoF of the $(2\times 2, M)$ X-Network, for $M=2,4$, when the input constellation is Gaussian distributed. In this work, we generalize the above schemes for arbitrary values of $M$. We identify a class of STBCs which when coupled with LiJ precoders achieve the maximum sum-rate (and hence, the sum-DoF\footnote{Throughout the paper, the term ``sum-rate'' pertains to the case where finite input constellations are employed while achievability of ``sum-DoF'' holds relevance when the input constellations are Gaussian distributed.} when utilizing Gaussian distributed input constellations) of $\frac{4M}{3}$ cspcu for the $(2\times 2, M)$ X-Network\footnote{This absence of reduction in the DoF upon the introduction of an STBC is analogous to information-losslessness due to certain STBCs in single-user MIMO systems \cite{HaH2002,SRS2006}.}. The Alamouti STBC and the Srinath-Rajan STBC used in \cite{LiJ,AbR_4x_X-Ch_TIT2014} are special cases of the class we propose in this paper. Moreover, with fixed finite input constellations, a diversity gain of $M+1$ is proven to be guaranteed, and this also establishes that the linear transmission schemes of \cite{LiJ,AbR_4x_X-Ch_TIT2014} achieve a diversity gain of $3$ and $4$ respectively for the $(2\times 2, 2)$ X-Network and the $(2\times 2, 4)$ X-Network . It must be noted that a straightforward generalization of the proof of diversity gain given in \cite{AbR_4x_X-Ch_TIT2014} to the transmission scheme proposed in this paper can guarantee a diversity gain of only $M$. So, the result in this paper on the diversity gain is an improvement over existing ones in the literature.

The contributions of this paper may be summarized as follows. 

\begin{itemize}
\item A class of STBCs, namely STBCs with the column-cancellation property (see Definition \ref{def6} in Section \ref{sec2a}), when coupled with the LiJ precoders is shown to achieve the $\frac{4M}{3}$ sum-DoF of the $(2\times 2, M)$ X-Network. These STBCs are based on STBCs obtained from cyclic division algebras (CDA) \cite{SRS2003}, the explicit construction of which is available in the literature for arbitrary $M$. Since LiJ precoders are used in this work, the $\frac{4M}{3}$ sum-DoF is achieved using local CSIT whereas the Jafar-Shamai scheme \cite{JaS_X_Ch_2008} assumes global CSIT.

\item We prove that when fixed finite input constellations are employed, a diversity gain of $M+1$ is guaranteed with the proposed transmission scheme.

\item For $M=3$, we propose a new STBC with the column-cancellation property and having the minimum possible delay. We show that upon using this STBC in the $(2 \times 2, 3)$ X-Network, the maximum sum-rate of $4$ cspcu and a diversity gain of $4$ with fixed finite input constellations is achieved.
\end{itemize}

The rest of the paper is organized as follows. Section \ref{sec2} provides the signal model and relevant definitions. In Section \ref{sec3}, the proposed linear transmission scheme for the $(2\times 2,M)$ X-Network is presented and it is shown to achieve the maximum sum-rate of $\frac{4M}{3}$ cspcu and also a guaranteed diversity gain of $M+1$ (with fixed finite input constellations) for arbitrary values of $M$. Section \ref{sec4} provides a novel low-delay linear transmission scheme for the $(2\times 2,3)$ X-Network which achieves the maximum sum-rate of $4$ cspcu and a guaranteed diversity gain of 4. The simulation results are presented in Sub-section \ref{subsec4a} and the concluding remarks constitute Section \ref{sec5}. 

\noindent {\it Notation}: Throughout the paper, the following notation is employed. 
\begin{itemize}
 \item Bold, lowercase letters denote vectors, and bold, uppercase letters denote matrices.
 \item $\mathbf{X}^{H}$, $\mathbf{X}^{T}$, $det(\mathbf{X})$, $tr(\mathbf{X})$, $Rank(\mathbf{X})$ and $\Vert \mathbf{X} \Vert$ denote the conjugate transpose, the transpose, the determinant, the trace, the rank, and the Frobenius norm of $\mathbf{X}$, respectively. Further, $\mathbf{X}^*$ denotes the entry-wise conjugation of the elements of $\mathbf{X}$, i.e., $\mathbf{X}^* = \left(\mathbf{X}^{H}\right)^T$.
\item $\textrm{diag}[\mathbf{A}_1,\mathbf{A}_2,\cdots,\mathbf{A}_n]$ denotes a block diagonal matrix with matrices $\mathbf{A}_1$, $\mathbf{A}_2$, $\cdots$, $\mathbf{A}_n$ on its main diagonal blocks.
\item The real and the imaginary parts of a complex-valued vector $\mathbf{x}$ are denoted by $\mathbf{x}_{I}$ and $\mathbf{x}_Q$, respectively. 
\item For a set $\mathcal{S}$, $\vert \mathcal{S}\vert$ denotes its cardinality while for a complex number $x$, $\vert x \vert$ denotes its absolute value.
\item $\mathbf{I}_T$ denotes the identity matrix of size $T \times T$, and $\mathbf{0}$ denotes the null matrix whose dimensions, unless specified in the subscript, are understood from context.
\item For a complex random matrix $\mathbf{X}$, $\mathbb{E}_\mathbf{X}(f(\mathbf{X}))$ denotes the expectation of a real-valued function $f(\mathbf{X})$ over the distribution of $\mathbf{X}$. 
\item $\mathbb{R}$ and $\mathbb{C}$ denote the field of real and complex numbers, respectively. \item Unless used as an index, a subscript or a superscript, $i$ denotes $\sqrt{-1}$. 
\item Unless otherwise specified, for a matrix $\mathbf{X} \in \mathbb{C}^{m\times n}$, $\mathbf{X}(i)$ denotes the $i^{th}$ column of  $\mathbf{X}$, $i \leq N$, and for a set $\mathcal{T} \subset \{1,2,\cdots,N \}$, $\mathbf{X}(\mathcal{T})$ denotes the matrix whose columns are the columns of $\mathbf{X}$ indexed by the elements of $\mathcal{T}$. Further, $\mathbf{X}(i:j,k:l)$ denotes the submatrix of $\mathbf{X}$ consisting of the elements of $\mathbf{X}$ from Row $i$ to Row $j$, Column $k$ to Column $l$, with $1 \leq i < j \leq m$, $1 \leq k < l \leq n $. 
\item For a complex variable $x$, $\check{x}$ is defined as
\begin{equation*}
\check{x} := \left[ \begin{array}{rr}
                             x_I & -x_Q \\
                             x_Q & x_I \\
                            \end{array}\right],
\end{equation*}
and for any matrix $\mathbf{X} \in \mathbb{C}^{n \times m}$, the matrix $\check{\mathbf{X}}$ belonging to $\mathbb{R}^{2n \times 2m}$ is obtained by replacing each entry $x_{ij}$ with $\check{x}_{ij}$, $i=1,2,\cdots, n, j = 1,2,\cdots,m$.
\item The $\widetilde{(.)}$ operator acting on a complex vector is defined as follows. For $\mathbf{x} = [ x_1, x_2, \cdots, x_n ]^T \in \mathbb{C}^{n\times1}$, $\widetilde{\mathbf{x}} := [ x_{1I},x_{1Q}, \cdots, x_{nI}, x_{nQ} ]^T \in \mathbb{R}^{2n\times1}$.
\item $vec(\mathbf{A})$ denotes the vector obtained by stacking the columns of the matrix $\mathbf{A} \in \mathbb{C}^{m\times n}$ one below the other so that $vec(\mathbf{A})  = [\mathbf{A}(1)^T~\mathbf{A}(2)^T\cdots \mathbf{A}(n)^T]^T \in \mathbb{C}^{mn \times 1}$. It follows that, $\widetilde{vec(\mathbf{A})} \in \mathbb{R}^{2mn\times1}$.
\item  The Q-function of $x$ is denoted by $Q(x)$ and given as 
\begin{eqnarray*}
Q(x)  =  \int_{x}^{\infty}\frac{1}{\sqrt{2\pi}}e^{-\frac{t^2}{2}}dt.
\end{eqnarray*}
\item Throughout the paper, $\log x$ denotes the logarithm of $x$ to base 2.
\item The notation $\mathbf{y} \sim {\cal CN}(0,\textbf{I}_T)$ denotes that $\mathbf{y} \in \mathbb{C}^{T \times1}$ has the standard complex normal distribution.
\item $f(x)\doteq x^b$ denotes that $\underset{x \to \infty}{\operatorname{lim}}\frac{\log f(x)}{\log x} = b$, and $\dot{\leq}$ is similarly defined. 
\item $f(x)\doteq g(x)$ denotes that $\underset{x \to \infty}{\operatorname{lim}}\frac{\log f(x)}{\log x} = \underset{x \to \infty}{\operatorname{lim}}\frac{\log g(x)}{\log x}$.
\item $a^+ :=\max(0,a)$.
\item For a real number $a$, $\left\lceil a \right\rceil$ denotes the smallest integer not lower than $a$ while $\left\lfloor a \right\rfloor$ denotes the largest integer not greater than $a$ .
\end{itemize}

\section{Signal Model and Definitions} \label{sec2}
\begin{figure}[htbp]
\centering
\includegraphics[totalheight=3.1in,width=3.5in]{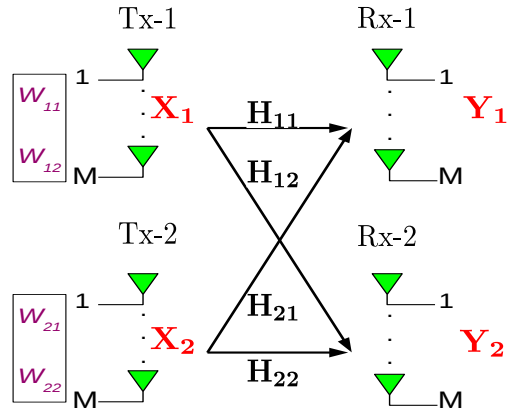}
\vspace{-1cm}
\caption{The $(2\times 2,M)$ X-Network.}
\label{fig-sys_model}
\end{figure}
The $(2\times 2,M)$ X-Network is depicted in Fig. \ref{fig-sys_model}. Two transmitters and two receivers seek to communicate with each other in the presence of additive white Gaussian noise at the receivers. Transmitter $i$ (Tx-$i$) generates an independent message $W_{ij}$ intended for Receiver $j$ (Rx-$j$), $i,j=1,2$. The messages $W_{ij}$ are mapped to a signal matrix $\mathbf{X}_{i} \in \mathbb{C}^{M \times T'}$, $i,j=1,2$. Denoting the output signal matrix at Rx-$j$ by $\mathbf{Y}_{j} \in \mathbb{C}^{M \times T'}$, and the channel matrix from Tx-$i$ to Rx-$j$ by $\mathbf{H}_{ij} \in \mathbb{C}^{M \times M}$, the input-output relation over $T'$ time slots is given by
\begin{align*}
\label{eqn-sys_model}
 \mathbf{Y}_{j}=\sqrt{\rho}\sum_{i=1}^{2} \mathbf{H}_{ij}\mathbf{X}_{i}+\mathbf{N}_j,
\end{align*}where $\mathbf{N}_j \in \mathbb{C}^{M \times T'}$ denotes the noise matrix whose entries are independent and identically distributed (i.i.d.) standard complex normal random variables. The average power constraint at each of the transmitters is $\rho$, and hence $tr\left(\mathbb{E}\left[\mathbf{X}^H_{i}\mathbf{X}_{i}\right]\right)\leq T'$, $i =1,2$. The channel gains are assumed to be constant during the transmission of an entire signal matrix. For the sum-DoF evaluation, the real and imaginary parts of the channel gains are assumed to be distributed independently according to some arbitrary continuous distribution. For the diversity gain evaluation, the channel gains are assumed to be i.i.d. standard complex normal random variables, and experience block-fading. Local CSIT and global CSIR is assumed throughout the paper. 

\subsection{Definitions}\label{sec2a}
A few of the definitions presented below are already available in the literature, while a few other terms are introduced in this paper.
\begin{definition}[Space-Time Block Code \cite{SRS2003}] \label{def1}
 For an $M$ transmit antenna MIMO system, an $(M,T)$ space-time block code (STBC) $\mathcal{X}$ is a finite set of complex matrices of size $M \times {T}$. The block length of the STBC is $T$ channel uses.
\end{definition}

\begin{definition}[Linear STBC \cite{HaH2002}] \label{def2}
An $(M,T)$ STBC $\mathcal{X}$ is called a linear STBC if it can be expressed as 
 \[\mathcal{X} = \left\{ \mathbf{X}=\sum_{i=1}^{k}\mathbf{A}_{iI}x_{iI}+\mathbf{A}_{iQ}x_{iQ} ~ \Big{\vert} ~ \mathbf{A}_{iI}, \mathbf{A}_{iQ}  \in \mathbb{C}^{M \times T}, x_i := x_{iI}+ix_{iQ} \in {\cal Q}_i \right\}, \] 
 where the matrices $\mathbf{A}_{iI},\mathbf{A}_{iQ}$ are called {\it weight matrices} \cite{ZaR}, and $\mathcal{Q}_i$, $i=1,\cdots,k$, are complex constellations with finite cardinality. 
\end{definition}

In the literature, it is generally assumed that $\mathcal{Q}_1 = \mathcal{Q}_2 = \cdots = \mathcal{Q}_k = \mathcal{Q}$ where $\mathcal{Q}$ is either a QAM or a PSK constellation. Linear STBCs are particularly of interest because of the ease of encoding and to an extent, decoding (using the sphere decoder \cite{ViB}). 

\begin{definition}[Rate of a linear STBC]\label{def3}
The rate of an $(M,T)$ linear STBC $\mathcal{X}$ given by 
\[\mathcal{X} = \left\{\mathbf{X}=\sum_{i=1}^{k}\mathbf{A}_{iI}x_{iI}+\mathbf{A}_{iQ}x_{iQ}~ \Big{\vert} ~ \mathbf{A}_{iI}, \mathbf{A}_{iQ}  \in \mathbb{C}^{M \times T}, x_i=x_{iI}+ix_{iQ} \in {\mathcal Q}\right\}\] is said to be $\frac{k}{T}$ complex symbols per channel use (cspcu) if the weight matrices $\mathbf{A}_{iI},\mathbf{A}_{iQ}$ are linearly independent over $\mathbb{R}$.
\end{definition}

Note that rate is not defined to be the number of statistically independent symbols encoded per channel use because an arbitrary number of statistically independent symbols could be packed even in a single dimension. Definition \ref{def3} is inspired by the general design principle that it is more energy-efficient to pack a given number of constellation points in a higher dimensional space than in a lower dimensional space \cite[Chapter $3$]{TsV}. An implication of Definition \ref{def3} is that $\{vec\left(\mathbf{A}_{iI}\right),vec\left(\mathbf{A}_{iQ}\right), i = 1,\cdots,k\}$ is a linearly independent set over $\mathbb{R}$. Associated with every linear STBC is its {\it generator matrix} which is defined as follows. 
\begin{definition}[Generator matrix of a linear STBC \cite{SrR1}]\label{def4}
For an $(M,T)$ linear STBC $\mathcal{X}$ given by 
\[\mathcal{X} = \left\{\mathbf{X}=\sum_{i=1}^{k}\mathbf{A}_{iI}x_{iI}+\mathbf{A}_{iQ}x_{iQ}~ \Big{\vert} ~ \mathbf{A}_{iI}, \mathbf{A}_{iQ}  \in \mathbb{C}^{M \times T}, x_i=x_{iI}+ix_{iQ} \in {\mathcal Q}\right\},\] its generator matrix $\mathbf{G} \in \mathbb{R}^{2MT \times 2k}$ is given by
 \begin{align*}
  \mathbf{G}=\left[\widetilde{vec\left(\mathbf{A}_{1I}\right)} ~\widetilde{vec\left(\mathbf{A}_{1Q}\right)} ~\widetilde{vec\left(\mathbf{A}_{2I}\right)} ~\widetilde{vec\left(\mathbf{A}_{2Q}\right)} ~\cdots~ ~\widetilde{vec\left(\mathbf{A}_{kI}\right)} ~\widetilde{vec\left(\mathbf{A}_{kQ}\right)} \right]
  \end{align*}
so that $\widetilde{vec\left(\mathbf{X}\right)}= \mathbf{G}\widetilde{\mathbf{x}}$ where $\mathbf{x}:=[x_{1} ~x_{2} ~\cdots~ ~x_{k}]^T$. For those linear STBCs of the form \[\mathcal{X} = \left\{\mathbf{X}=\sum_{i=1}^{k}\mathbf{A}_{i}x_{i}~ \Big{\vert} ~ \mathbf{A}_{i} \in \mathbb{C}^{M \times T}, x_i \in {\mathcal Q}\right\},\] we prefer to use the complex version of the generator matrix $\mathbf{G}_C \in \mathbb{c}^{MT \times k}$, which is defined as
\begin{align} \label{eqn_gen_STBC1}
  \mathbf{G}_C=\left[vec\left(\mathbf{A}_{1}\right) ~vec\left(\mathbf{A}_{2}\right)  ~\cdots~ vec\left(\mathbf{A}_{k}\right) \right]
  \end{align}
so that $vec\left(\mathbf{X}\right)= \mathbf{G}_C\mathbf{x}$.
 \end{definition}
 
\begin{definition}[Full-rank STBC \cite{TSC}] \label{def_rank}
An $(M,T)$ STBC $\mathcal{X}$ is said to be full-ranked if
 \[ Rank(\mathbf{X}_1 - \mathbf{X}_2) < M  \Rightarrow \mathbf{X}_1 = \mathbf{X}_2, ~~ \forall \mathbf{X}_1, \mathbf{X}_2 \in \mathcal{X}. \]
\end{definition}

\noindent In other words, full-rankness of an STBC means that the difference matrix of any two distinct codewords of the STBC must be full-ranked.

\begin{definition}[Gaussian-stabilizer function]\label{def5}
 A function $f:\mathbb{C}^{M \times 1}\rightarrow \mathbb{C}^{M \times 1}$ is said to be a Gaussian-stabilizer (GS) function if $f(\mathbf{n}) \sim {\cal CN}\left(\mathbf{0},\mathbf{I}_M\right)$ for $\mathbf{n} \sim {\cal CN}\left(\mathbf{0},\mathbf{I}_M\right)$.
\end{definition}

 Examples of GS-functions are $f(\mathbf{x}) = \mathbf{Ux}$ for any unitary matrix $\mathbf{U} \in \mathbb{C}^{M\times M}$, and $f(\mathbf{x}) = \mathbf{x}^*$. Also, if $f_1$ and $f_2$ are two GS-functions, then so is $f_1 \circ f_2$, where $(f_1 \circ f_2)(\mathbf{x}) := f_1(f_2(\mathbf{x}))$.

\begin{definition}[Column-Cancellation (CC) Property  of an STBC]\label{def6}
Consider an $(M,2T)$ STBC $\mathcal{X}$. Let $\mathcal{T} = \{1, 2,\cdots,2T\}$. Then, $\mathcal{X}$ is said to possess the column-cancellation property if there exist a permutation $\pi : \mathcal{T} \rightarrow \mathcal{T}$ and GS-functions $f_{i},g_{i} :\mathbb{C}^{M \times 1}\rightarrow \mathbb{C}^{M \times 1}$, $i = 1,2,\cdots,T$, such that for every $\mathbf{X} \in \mathcal{X}$,\[\mathbf{X}(\pi(i))+f_{i}\left(\mathbf{X}(\pi(i+T))\right)= g_{i}(\mathbf{X}(\pi(i)))+\mathbf{X}\left(\pi(i+T)\right)=\mathbf{0}, ~~~~~\forall i=1,\cdots,T.\]
\end{definition}
In other words, the CC-property ensures that upon permuting the columns of the codewords of the STBC, the first $T$ columns can be respectively canceled using the last $T$ columns and vice-versa using GS-functions. 
\begin{example}
The $(2,2)$ Alamouti STBC whose codeword matrix is of the form 
\begin{align*}
\mathbf{X}=\begin{bmatrix}
    x_{1} & -x_{2}^*\\
    x_{2} &   x_{1}^*\\
   \end{bmatrix}
\end{align*}
has the CC-property with $T=1$. On choosing $f_1(\mathbf{x}) = \mathbf{P}_1\mathbf{x}^*$, $g_1(\mathbf{x}) = \mathbf{P}_2\mathbf{x}^*$, where
\begin{equation*}
 \mathbf{P}_1 = \left[ \begin{array}{cc}
               0 & -1\\
               1 & 0\\
              \end{array}\right], ~~~ \mathbf{P}_2 = \left[\begin{array}{cc}
               0 & 1\\
               -1 & 0\\
              \end{array}\right],
\end{equation*}
it is clear that the first column of the STBC can be canceled using the second and vice-versa, i.e.,
\begin{align*}
\mathbf{X}(1)+f_1\left(\mathbf{X}(2)\right)=g_1(\mathbf{X}(1))+\mathbf{X}\left(2\right)=\mathbf{0}.
\end{align*}
Note that both $f_1(.)$ and $g_1(.)$ are GS-functions.
\end{example}
\begin{example}
The $(4,4)$ Srinath-Rajan STBC whose codeword matrix is of the form
\begin{align*} \label{eqn-SR_STBC} 
 \mathbf{X}=   \begin{bmatrix}
        x_{1I}+ix_{3Q} & -x_{2I}+ix_{4Q} & e^{i\theta}\left(x_{5I}+ix_{7Q}\right) & e^{i\theta}\left(-x_{6I}+ix_{8Q}\right) \\
	x_{2I}+ix_{4Q} &  x_{1I}-ix_{3Q} & e^{i\theta}\left(x_{6I}+ix_{8Q}\right) & e^{i\theta}\left(x_{5I}-ix_{7Q}\right) \\
	e^{i\theta}\left(x_{7I}+ix_{5Q}\right) & e^{i\theta}\left(-x_{8I}+ix_{6Q}\right) & x_{3I}+ix_{1Q} & -x_{4I}+ix_{2Q} \\
	e^{i\theta}\left(x_{8I}+ix_{6Q}\right) & e^{i\theta}\left(x_{7I}-ix_{5Q}\right) & x_{4I}+ix_{2Q} &  x_{3I}-ix_{1Q}\\
     \end{bmatrix}
\end{align*}for some $\theta \in [0,2\pi)$, also possesses the CC-property with $T=2$. Choosing $\pi(1) =1$, $\pi(2)=3,\pi(3) =2$, $\pi(4)=4$, and GS-functions $f_1(\mathbf{x}) = \mathbf{P}_1\mathbf{x}^*$, $f_2(\mathbf{x}) = \mathbf{P}_2\mathbf{x}^*$, $g_1(\mathbf{x}) = \mathbf{P}_3\mathbf{x}^*$, $g_2(\mathbf{x}) = \mathbf{P}_4\mathbf{x}^*$, where
\begin{equation*}
 \mathbf{P}_1 = \left[ \begin{array}{cccc}
                        0 & -1 &  0 & 0\\
                        1 & 0 & 0 & 0\\
                        0 & 0 & 0 & -e^{2i\theta}\\
                        0 & 0 & e^{2i\theta} & 0 \\
                       \end{array}\right], ~~~ \mathbf{P}_2 = \left[ \begin{array}{cccc}
                        0 & -e^{2i\theta} &  0 & 0\\
                        e^{2i\theta} & 0 & 0 & 0\\
                        0 & 0 & 0 & -1\\
                        0 & 0 & 1 & 0 \\
                       \end{array}\right],
\end{equation*}
\begin{equation*}
 \mathbf{P}_3 = \left[ \begin{array}{cccc}
                        0 & 1 &  0 & 0\\
                        -1 & 0 & 0 & 0\\
                        0 & 0 & 0 & e^{2i\theta}\\
                        0 & 0 & -e^{2i\theta} & 0 \\
                       \end{array}\right], ~~~ \mathbf{P}_4 = \left[ \begin{array}{cccc}
                        0 & e^{2i\theta} &  0 & 0\\
                        -e^{2i\theta} & 0 & 0 & 0\\
                        0 & 0 & 0 & 1\\
                        0 & 0 & -1 & 0 \\
                       \end{array}\right],
\end{equation*}
it is clear that the conditions necessary for the CC-property to hold are satisfied.
\end{example}

\begin{definition}[Linear Transmission Scheme] \label{def7}
Consider a Gaussian interference network\footnote{It must be noted that the terminology ``Gaussian network'', by default, refers to linear channels. A Gaussian interference network has a linear channel with arbitrary (fixed) number of transmitters and an arbitrary (fixed) number of receivers with arbitrary (fixed) message demands.} with $K$ transmitters each having $M$ antennas. Let $\mathbf{X}_i \in \mathbb{C}^{M \times T}$ be the signal matrix that is transmitted over $T$ uses of the channel by Tx-$i$, $i=1,\cdots,K$, with $\mathbf{X}_i = f_i(\mathbf{x}_i)$, where $\mathbf{x}_i \in \mathbb{C}^{k_i \times 1}$ and $f_i : \mathbb{C}^{k_i\times 1} \rightarrow \mathbb{C}^{M \times T}$. Here,  $\mathbf{x}_i$ represents the information bearing symbol vector that Tx-$i$ intends to transmit over the channel and $f_i(.)$ is its encoding function. This transmission scheme $S(K,M,T,f_i(.),k_i)$ is said to be linear if for every $f_i(.)$, $i=1,\cdots,K$, $f_i(a\mathbf{x}_i+a'\mathbf{x}_i') = af_i(\mathbf{x}_i) + a'f_i(\mathbf{x}_i')$, for some complex constants $a$ and $a'$.
\end{definition}

Note that in practice, the symbol vectors $\mathbf{x}_i \in \mathcal{Q}^{k_i \times 1}$ with $\mathcal{Q}$ having finite cardinality. So, it might well be that for $\mathbf{x}_i, \mathbf{x}_i' \in \mathcal{Q}^{k_i \times 1}$, $a\mathbf{x}_i+a' \mathbf{x}_i' \notin  \mathcal{Q}^{k_i \times 1}$, but this has no bearing on Definition \ref{def7}.

\begin{definition}[Sum-rate of a linear transmission scheme]\label{def8}
Consider a Gaussian interference network with $K$ transmitters and $L$ receivers, each having $M$ antennas. For a linear transmission scheme $S(K,M,$ $T, f_i(.),k_i)$, the received signal matrix at Rx-$j$, $j=1,\cdots,L$, is  
\begin{align*}
 \mathbf{Y}_{j}=\sqrt{\rho}\sum_{i=1}^{K} \mathbf{H}_{ij}f_i(\mathbf{x}_i)+\mathbf{N}_j,
\end{align*}
where $\mathbf{N}_j \in \mathbb{C}^{M \times T}$ denotes the noise matrix with its entries being i.i.d. standard complex normal random variables, and $\mathbf{H}_{ij}$ the channel matrix from Tx-$i$ to Rx-$j$ (constant during the transmission of an entire signal matrix). Let $\mathbf{x}_j' \in \mathbb{C}^{k_j' \times 1}$ be the desired symbol vector at Rx-$j$. Then, the sum-rate of $S(K,M,T,f_i(.),k_i)$ is said to be $\frac{\sum_{i=1}^{K}k_i}{T}$ complex symbols per channel use if there exist functions $g_j(.)$ and positive integers $p_j \geq k_j'$, $j=1,\cdots,L$, which satisfy
\begin{eqnarray*}
 g_j : \mathbb{C}^{M \times T} & \longrightarrow & \mathbb{C}^{p_j \times 1} \\
 \mathbf{Y}_j & \longmapsto & \mathbf{A}_j\mathbf{x}_j' + \mathbf{n}
\end{eqnarray*}
where $\mathbf{A}_j \in \mathbb{C}^{p_j \times k_j'}$ which is dependent on $\{\mathbf{H}_{ij}, 1=1,\cdots,K,j=1,\cdots,L\}$ has rank $k_j'$ almost surely, and $\mathbf{n} \sim \mathcal{CN}(\mathbf{0},\mathbf{I}_{p_j})$.
\end{definition}

\begin{remark}
 It is easy to see that the maximum sum-rate (in cspcu) that a linear transmission scheme $S(K,M,T,$ $f_i(.),k_i)$ can achieve equals the sum-DoF of the network. Using standard information-theoretic arguments, it follows that a maximum-sum-rate achieving linear transmission scheme achieves the sum-DoF of the network when the input constellations are Gaussian distributed and the coding length is unlimited. 
\end{remark}

\section{Linear Transmission Scheme for the $(2\times 2,M)$ X-Network} \label{sec3}
We now describe the linear transmission scheme for the general $(2\times 2, M)$ X-Network that achieves the sum-rate of $\frac{4M}{3}$ cspcu. We make use of STBCs from cyclic division algebras (CDA) \cite{SRS2003}. It is well known that STBCs from CDA exist for any number of transmit antennas \cite{ESK_2007}. For a detailed understanding of STBCs from CDA, one can refer to \cite{ESK_2007}, \cite{ORBV}, and references therein. Two key properties of STBCs from CDA that we need in this paper are as follows. Let $\mathcal{X}$ be an STBC from CDA for $M\geq 2$ transmit antennas.  
\begin{enumerate}
 \item For any $\mathbf{X}_1,\mathbf{X}_2 \in \mathcal{X}$, $Rank(\mathbf{X}_1-\mathbf{X}_2) \neq M$ if and only if $\mathbf{X}_1 = \mathbf{X}_2$. In other words, $\mathcal{X}$ is a full-rank STBC (Definition \ref{def_rank}, Section \ref{sec2a}).
 \item $\mathcal{X}$ is an $(M,M)$ linear STBC that encodes $M^2$ linearly and statistically independent complex symbols in $M$ channel uses. Therefore, $\mathcal{X}$ is a rate-$M$ STBC of block length $M$. The complex generator matrix of $\mathcal{X}$, as defined in \eqref{eqn_gen_STBC1}, is of size $M^2 \times M^2$ \cite{ESK_2007}.
\end{enumerate} 

Now, for reasons that are made clear in Theorem \ref{thm_div} and Theorem \ref{thm-max_sum_rate_Mx} that are stated in the following part of this section, we seek full-rank STBCs that have a rate of $M/2$ cspcu and are further equipped with the CC-property. In view of this, we make use of the following lemma.
\begin{lemma}
For every $M \geq 2$, there exist full-rank, rate-$\frac{M}{2}$ STBCs of block length $2T$ for some $T \geq M/2$ that have the CC-property. 
\end{lemma}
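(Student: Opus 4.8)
The plan is to obtain the required code by concatenating a full-rank space-time block code from a cyclic division algebra with a conjugated copy of itself, so that the appended columns are by construction Gaussian-stabilizer images of the original ones. I would start from the two structural facts about CDA STBCs recalled just before the lemma: for every $M \geq 2$ there exists an $(M,M)$ linear STBC $\mathcal{X}_0$ that is full-rank and carries $M^2$ linearly and statistically independent complex symbols over $M$ channel uses (hence it is a rate-$M$ code of block length $M$).

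Next I would define the candidate
\[\mathcal{X} := \left\{ \left[ \mathbf{X}_0 ~~ \mathbf{X}_0^* \right] ~:~ \mathbf{X}_0 \in \mathcal{X}_0 \right\},\]
an $(M,2M)$ STBC whose block length is thus $2T$ with $T = M \geq M/2$, as required. Since the second block $\mathbf{X}_0^*$ is entirely determined by $\mathbf{X}_0$, the code still carries exactly the $M^2$ information symbols of $\mathcal{X}_0$, now over $2M$ channel uses, so its rate is $M^2/(2M) = M/2$ cspcu. To make this precise I would verify that the weight matrices of $\mathcal{X}$ remain linearly independent over $\mathbb{R}$: the real symbols $x_{kI}$ and $x_{kQ}$ acquire weight matrices $[\mathbf{A}_{kI}~\mathbf{A}_{kI}^*]$ and $[\mathbf{A}_{kQ}~\mathbf{A}_{kQ}^*]$, and any vanishing real linear combination of these, read off on the first $M$ columns, forces all coefficients to be zero by the linear independence of the weight matrices of $\mathcal{X}_0$.

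For full-rankness I would take distinct codewords $\mathbf{X} = [\mathbf{X}_0~\mathbf{X}_0^*]$ and $\mathbf{X}' = [\mathbf{X}_0'~(\mathbf{X}_0')^*]$ and write their difference as $[\mathbf{\Delta}~\mathbf{\Delta}^*]$ with $\mathbf{\Delta} := \mathbf{X}_0 - \mathbf{X}_0'$, so that $Rank([\mathbf{\Delta}~\mathbf{\Delta}^*]) \geq Rank(\mathbf{\Delta})$; hence $Rank(\mathbf{X} - \mathbf{X}') < M$ forces $Rank(\mathbf{\Delta}) < M$, which by full-rankness of $\mathcal{X}_0$ yields $\mathbf{X}_0 = \mathbf{X}_0'$ and therefore $\mathbf{X} = \mathbf{X}'$. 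For the CC-property I would choose $T = M$, the identity permutation $\pi$, and the maps $f_i(\mathbf{x}) = g_i(\mathbf{x}) = -\mathbf{x}^*$, which are GS-functions since they compose conjugation with multiplication by the unitary $-\mathbf{I}_M$. Because $\mathbf{X}(i+M) = \mathbf{X}(i)^*$ for $i = 1,\ldots,M$, both required identities hold: $\mathbf{X}(i) + f_i(\mathbf{X}(i+M)) = \mathbf{X}(i) - \left(\mathbf{X}(i)^*\right)^* = \mathbf{0}$ and $g_i(\mathbf{X}(i)) + \mathbf{X}(i+M) = -\mathbf{X}(i)^* + \mathbf{X}(i)^* = \mathbf{0}$.

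The one point of tension in any such construction is between the CC-property, which wants the second half of the columns to be deterministic GS-images of the first half, and full-rankness, which wants sufficient linear spread among the columns; the concatenation defuses it almost for free, because appending columns can never lower the rank, so a single full-rank CDA codeword sitting in the first $M$ columns already secures full-rankness while the conjugated block supplies the cancellation structure. Accordingly I expect the genuinely hard part to lie not in this existence claim but in minimizing the delay: the argument above only delivers $T = M$, whereas Alamouti ($M=2$) and Srinath-Rajan ($M=4$) attain the minimal $T = M/2$ by interleaving the conjugate pairing within a single length-$M$ block, which makes full-rankness considerably more delicate and which, for general $M$ (for instance $M=3$), calls for a separate and more careful design.
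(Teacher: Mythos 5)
Your proof is correct and follows essentially the same route as the paper: the paper forms $\bar{\mathcal{X}} = \{[\mathbf{X}~~\mathbf{PX}] : \mathbf{X} \in \mathcal{X}\}$ for a unitary $\mathbf{P}$, i.e., it also appends a deterministic GS-image of a CDA codeword to itself, and then reads off the rate, full-rankness, and CC-property exactly as you do; your variant merely uses conjugation ($\mathbf{X}^*$) in place of unitary premultiplication ($\mathbf{PX}$) as the GS-map, which is equally valid for this existence claim. Your closing remark is also on target: the paper's choice of $\mathbf{PX}$ (with a condition on the eigenvalue multiplicities of $\mathbf{P}$) is what is exploited later for the sum-rate result, and the genuinely delicate issue is indeed delay minimization, which the paper likewise defers to Remark \ref{rem1} and Section \ref{sec4}.
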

\begin{proof}
Let $\mathcal{X}$ be an STBC from CDA. Then, the STBC $\bar{\mathcal{X}}$ given by
\begin{align*}
 \bar{\mathcal{X}} :=  \left\{ [\mathbf{X} ~~\mathbf{PX}] ~ \vert ~ \mathbf{X} \in \mathcal{X} \right\},
\end{align*}
where $\mathbf{P}\in \mathbb{C}^{M \times M}$ is any unitary matrix, has a rate of $\frac{M}{2}$ cspcu and is of block length $2M$. It is easy to check that  $\bar{\mathcal{X}}$ has the CC-property. Since ${\mathcal{X}}$ is full-ranked, so is $\bar{\mathcal{X}}$.  
\end{proof}

\begin{remark} \label{rem1}
It is not necessary that the block length of a full-rank, rate-$M/2$ STBC with the CC-property be at least $2M$. For $M = 2, 4$, we have already shown that the Alamouti STBC and the Srinath-Rajan STBC, which are both full-rank STBCs, have the CC-property and both of them have a rate of $M/2$ cspcu. It turns out that $2\left\lceil \frac{M}{2} \right \rceil$ is the lower bound on the block length of full-rank, rate-$M/2$ STBCs with the CC-property. A general method to construct such minimum-block length STBCs is an open problem.
\end{remark}

Let $\bar{\mathcal{X}}$ denote an $(M,2T)$ STBC equipped with the CC-property\footnote{Henceforth in this paper, it is assumed without loss of generality that the first $T$ columns of the STBC with the CC-property can be canceled using the last $T$ columns and vice-versa. If not, the columns of the STBC can always be permuted to achieve the same.}. The messages $W_{ij}$, with reference to the signal model in Section \ref{sec2}, are mapped to the signal matrices as follows. Each $W_{ij}$ is mapped to $\mathbf{X}_{ij}$ as
\begin{align}\nonumber
 W_{11} \mapsto \mathbf{X}_{11} = \left[\bar{\mathbf{X}}_{11} ~~\mathbf{0}_{M \times T}\right],~~  W_{21} \mapsto  \mathbf{X}_{21} = \left[\bar{\mathbf{X}}_{21} ~~\mathbf{0}_{M \times T}\right],\\
  \label{eqn-STBC_Tx}
  W_{12} \mapsto  \mathbf{X}_{12} = \left[\mathbf{0}_{M \times T} ~~\bar{\mathbf{X}}_{12}\right], ~~  W_{22} \mapsto \mathbf{X}_{22} = \left[\mathbf{0}_{M \times T} ~~\bar{\mathbf{X}}_{22}\right],
\end{align}
where $\bar{\mathbf{X}}_{ij} \in \bar{\mathcal{X}}$. We assume that $\mathbb{E}[\Vert \bar{\mathbf{X}}_{ij}\Vert^2]\leq 2T$ with the codewords being uniformly drawn from $\bar{\mathbf{X}}$. We observe that there is a ``non-zero overlap'' from column $T+1$ to $2T$ between the matrices $\mathbf{X}_{i1}$ and $\mathbf{X}_{i2}$, as also indicated by the hatched regions at the transmitters in Fig. \ref{fig-tx_scheme}. 
\begin{figure}[htbp]
\centering
\includegraphics[height=3.5in,width=6in]{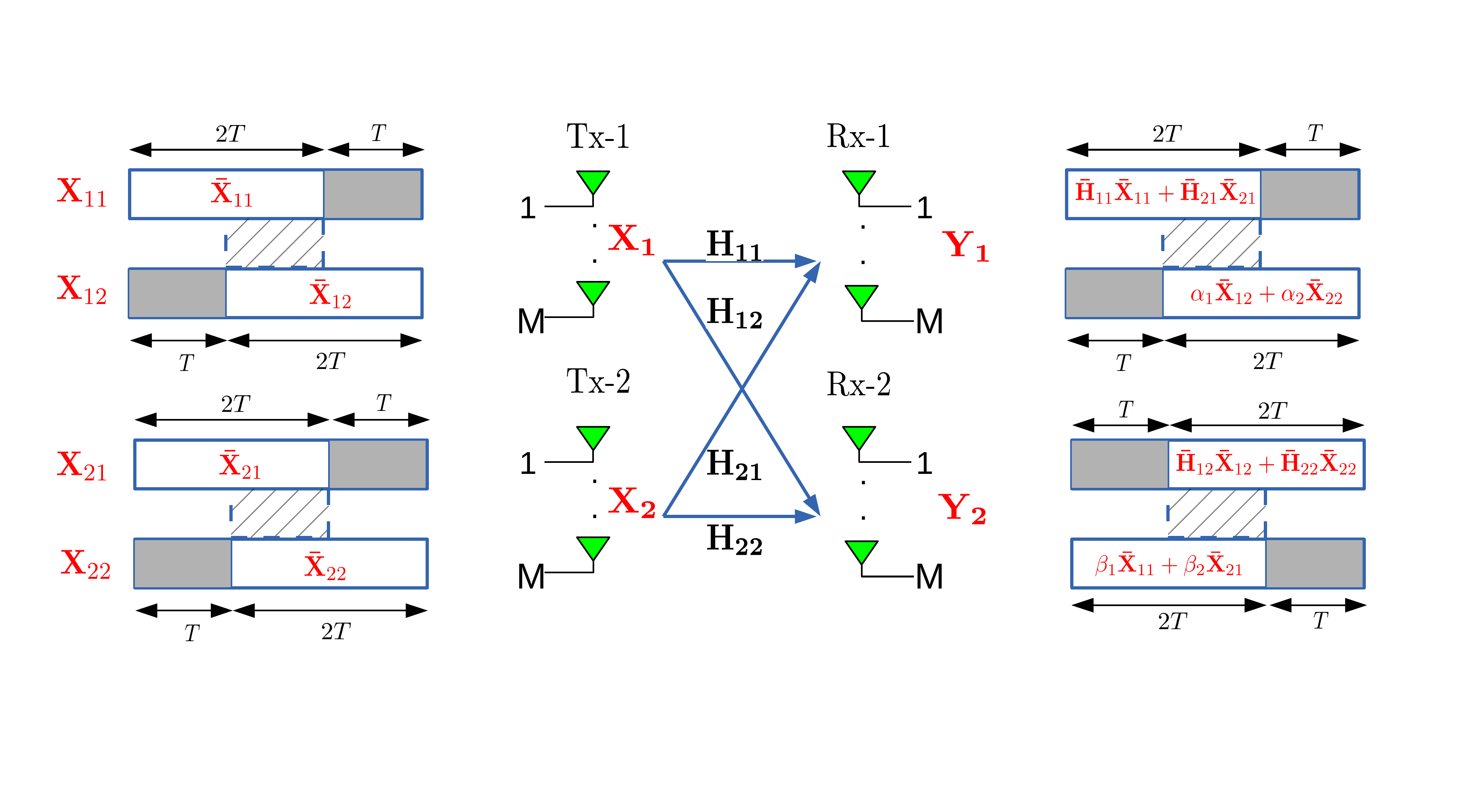}
\vspace{-1cm}
\caption{The transmission scheme that uses STBCs with the CC-property coupled with LiJ precoders is represented here. The power-normalizing scalars involved with the LiJ precoders are denoted by $\alpha_i,\beta_i$, and the effective channel matrices are denoted by $\mathbf{\bar{H}}_{ij}=\mathbf{H}_{ij}\mathbf{V}_{ij}$. The gray shaded regions represent null matrices. The hatched regions at the transmitters indicate the non-zero overlap in the message matrices from time instant $T+1$ to $2T$. The hatched regions at the receivers indicate interference from time instant $T+1$ to $2T$. The interference in the hatched regions is canceled using the CC-property of the STBC used.}
\label{fig-tx_scheme}
\end{figure} 
The transmitted symbols from Tx-$1$ and Tx-$2$ are now (with the average transmit power at each transmitter being limited by $\rho$) given by
\begin{align*}
&\mathbf{X}_1=\sqrt{\frac{3\rho}{4}}\left(\mathbf{V}_{11}\mathbf{X}_{11} + \mathbf{V}_{12} \mathbf{X}_{12}\right),\\
&\mathbf{X}_2=\sqrt{\frac{3\rho}{4}}\left(\mathbf{V}_{21}\mathbf{X}_{21} + \mathbf{V}_{22} \mathbf{X}_{22}\right),
\end{align*}where $\mathbf{V}_{ij}$, $i,j =1,2$, are the LiJ precoders \cite{LiJ} given by 
\begin{align*}
 &\mathbf{V}_{11}:= \frac{\mathbf{H}_{12}^{-1}}{\Vert \mathbf{H}_{12}^{-1} \Vert}, ~\mathbf{V}_{21}:= \frac{\mathbf{H}_{22}^{-1}}{\Vert \mathbf{H}_{22}^{-1} \Vert},\\ & \mathbf{V}_{12}:= \frac{\mathbf{H}_{11}^{-1}}{\Vert \mathbf{H}_{11}^{-1} \Vert},~ \mathbf{V}_{22}:=\frac{\mathbf{H}_{21}^{-1}}{\Vert \mathbf{H}_{21}^{-1} \Vert}.
\end{align*} The LiJ precoders ensure that the effective channel matrices faced by the interference symbols  are scaled identity matrices, and hence are aligned in the same subspace at the unintended receivers. The normalizing factors\footnote{Note that if $\mathbb{E}_{\mathbf{H}}\left[\left \Vert \mathbf{H}^{-1}\right \Vert^2\right] = \mathbb{E}_{\mathbf{H}}\left[tr\left(\left(\mathbf{HH}^H\right)^{-1}\right) \right]$ existed and equalled $a$ (for some positive real number $a < \infty$) for a random matrix $\mathbf{H}$ whose entries are i.i.d. standard complex normal random variables, we could have simply used $1/a$ as the normalizing factor for each $\mathbf{V}_{ij}$. But this unfortunately is not the case \cite{MaK_2000}.} for $\mathbf{H}_{ij}^{-1}$ are chosen to satisfy the power constraint which is $\mathbb{E}_{\mathbf{H}_{i1}}\left[ \left \Vert \mathbf{V}_{i2} \right\Vert^2\right] =\mathbb{E}_{\mathbf{H}_{i2}}\left[ \left \Vert \mathbf{V}_{i1} \right\Vert^2\right] =1$, for $i,j=1,2$.

The received symbol matrix $\mathbf{Y}_1 \in \mathbb{C}^{M \times 3T}$ at Rx-$1$ is given by 
\begin{align}\label{eqn-rx1}
 \mathbf{Y}_1 = \sqrt{\frac{3\rho}{4}} &\left(\mathbf{H}_{11}\mathbf{V}_{11}\mathbf{X}_{11}+\mathbf{H}_{21}\mathbf{V}_{21} \mathbf{X}_{21} + \frac{\mathbf{X}_{12}}{\Vert \mathbf{H}_{11}^{-1} \Vert}+\frac{\mathbf{X}_{22}}{\Vert \mathbf{H}_{21}^{-1} \Vert}\right) + \mathbf{N}_1.
\end{align}It can be observed from the structure of the zero and non-zero columns of $\mathbf{X}_{i2}$ defined in (\ref{eqn-STBC_Tx}) that only the received symbols from time instants $T+1$ to $2T$ face interference, as also indicated by the hatched regions at the receivers in Fig. \ref{fig-tx_scheme}. These interfering symbols can be canceled on account of the CC-property of the STBC used. Define the processed received symbol matrix, obtained after interference cancellation, by
\begin{align*}
&\mathbf{Y}'_1(t):=\mathbf{Y}(t), \text{ for } t=1,\cdots,T,\\
&\mathbf{Y}'_1(t):=\mathbf{Y}(t)+f_{t-T}(\mathbf{Y}(t+T)), \text{ for } t=T+1,\cdots,2T,
\end{align*}
where $f_1(.),f_2(.),\cdots,f_T(.)$ are GS-functions (Definition \ref{def5}, Section \ref{sec2a}). Note that the received symbols from time instants $1$ to $T$ are interference-free because of the presence of zero columns in $\mathbf{X}_{i2}$. We thus have an interference-free processed received symbol matrix $\mathbf{Y}'_1 \in \mathbf{C}^{M\times 2T}$ given by 
\begin{align} \label{eqn-rx1_IF}
&\mathbf{Y}'_1=\sqrt{\frac{3\rho}{4}}\left(\mathbf{H}_{11}\mathbf{V}_{11}\bar{\mathbf{X}}_{11}+\mathbf{H}_{21}\mathbf{V}_{21}\bar{\mathbf{X}}_{21}\right)+\mathbf{N}'_1,
\end{align}where $\mathbf{N}'$ is a noise matrix whose entries are independent but not identically distributed. We have $\mathbf{N}'(i) \sim {\cal CN}(0,\mathbf{I}_M)$, $i=1,2,\cdots, T$, and $\mathbf{N}'(i) \sim {\cal CN}(0,2\mathbf{I}_M)$, $i=T+1,T+2,\cdots, 2T$. Since increasing the noise variance affects neither the achieved DoF nor the diversity gain, we assume that $\mathbf{N}'(i) \sim {\cal CN}(0,2\mathbf{I}_M)$, $i=1,2,\cdots, 2T$.

Similarly, exploiting the CC-property of $\bar{\mathbf{X}}_{i1}$ (where we make use of the GS-functions $g_k(.)$, $k=1,\cdots,T$), the interference-free processed received symbols at Rx-$2$ is given by
\begin{align*}
&\mathbf{Y}'_2=\sqrt{\frac{3\rho}{4}}\left(\mathbf{H}_{12}\mathbf{V}_{12}\bar{\mathbf{X}}_{12}+\mathbf{H}_{22}\mathbf{V}_{22}\bar{\mathbf{X}}_{22}\right)+\mathbf{N}'_2,
\end{align*}where $\mathbf{N}'_2$ has the same distribution as $\mathbf{N}'_1$. Hereafter, we shall focus only on the symbol matrix $\mathbf{Y}'_1$ at Rx-$1$ and any claims about decoding the desired symbols hold good at Rx-$2$ also. Let $P_e$ denote the probability of error in decoding at Rx-$1$. The diversity gain $d_g$ is given by \cite{TSC} 
\begin{equation*}
 d_g = -\lim_{\rho \to \infty}\frac{\log P_e}{\log \rho}.
\end{equation*}

\noindent We now show that a diversity gain of $M+1$ is achievable if the following maximum-likelihood (ML) decoding rule is used.
\begin{align}\label{eqn-ML}
 &(\hat{\mathbf{X}}_{11},\hat{\mathbf{X}}_{21})= \arg \min_{(\bar{\mathbf{X}}_{11},\bar{\mathbf{X}}_{21} \in \bar{\mathcal{X}})}{\left\Vert\mathbf{Y}'_1-\sqrt{\frac{3\rho}{8}}\left(\mathbf{H}_{11}\mathbf{V}_{11}\bar{\mathbf{X}}_{11}+\mathbf{H}_{21}\mathbf{V}_{21}\bar{\mathbf{X}}_{21}\right)\right\Vert}.
\end{align}

It is well-known that a diversity gain of $M^2$ is achieved in a single user $M \times M$ MIMO system with Gaussian distributed channel coefficients when a full-rank STBC is employed \cite{TSC}. Here, we show that when the underlying STBC is full-ranked, a diversity gain of $M+1$ is guaranteed (it goes without saying that the input constellation is of fixed finite cardinality). The loss in the diversity gain relative to the single user MIMO setting is due to the fact that the effective channels seen by the STBCs are not Gaussian  distributed due to channel-dependent precoding at the transmitters. Full receive diversity gain is obtained whereas the transmit diversity gain is affected by precoding.

\begin{theorem} \label{thm_div}
 If the STBC $\bar{\mathbf{X}}$ is full-ranked, then the diversity gain obtained in the $(2\times 2,M)$ X-Network by ML decoding of $(\bar{\mathbf{X}}_{11},\bar{\mathbf{X}}_{21})$ using (\ref{eqn-ML}) is at least $M+1$.
\end{theorem}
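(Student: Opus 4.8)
The plan is to combine a union bound over pairwise error events with a Chernoff bound, and to show that each pairwise error probability (PEP) decays at least as $\rho^{-(M+1)}$; since the constellation is finite the number of confusable codeword pairs is a constant independent of $\rho$, so this suffices. Fix a transmitted pair $(\bar{\mathbf{X}}_{11},\bar{\mathbf{X}}_{21})$ and a competitor $(\hat{\mathbf{X}}_{11},\hat{\mathbf{X}}_{21})$, and set $\mathbf{\Delta}_1:=\bar{\mathbf{X}}_{11}-\hat{\mathbf{X}}_{11}$, $\mathbf{\Delta}_2:=\bar{\mathbf{X}}_{21}-\hat{\mathbf{X}}_{21}$, at least one of which is nonzero. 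Writing $\mathbf{E}_1:=\mathbf{H}_{11}\mathbf{V}_{11}$ and $\mathbf{E}_2:=\mathbf{H}_{21}\mathbf{V}_{21}$ for the two independent effective channels in \eqref{eqn-rx1_IF}, after whitening the noise the conditional PEP for the ML rule \eqref{eqn-ML} is a Gaussian tail, and the Chernoff bound $Q(x)\le\frac12 e^{-x^2/2}$ gives $\mathrm{PEP}\;\dot{\leq}\;\mathbb{E}_{\mathbf{H}}\!\left[\exp\!\big(-c\rho\,\Vert\mathbf{E}_1\mathbf{\Delta}_1+\mathbf{E}_2\mathbf{\Delta}_2\Vert^2\big)\right]$ for some constant $c>0$. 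I would first dispose of the cross term. Assuming $\mathbf{\Delta}_1\neq\mathbf{0}$ (the case $\mathbf{\Delta}_1=\mathbf{0}$ being symmetric), I condition on $(\mathbf{H}_{21},\mathbf{H}_{22})$ and on $\mathbf{H}_{12}$. Given $\mathbf{H}_{12}$, the matrix $\mathbf{E}_1\mathbf{\Delta}_1=\mathbf{H}_{11}\mathbf{H}_{12}^{-1}\mathbf{\Delta}_1/\Vert\mathbf{H}_{12}^{-1}\Vert$ is a zero-mean Gaussian in $\mathbf{H}_{11}$, and for any fixed shift $\mathbf{Z}$ the Gaussian integral $\mathbb{E}[\exp(-c\rho\Vert\mathbf{E}_1\mathbf{\Delta}_1+\mathbf{Z}\Vert^2)]$ is maximized at $\mathbf{Z}=\mathbf{0}$. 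Taking $\mathbf{Z}=\mathbf{E}_2\mathbf{\Delta}_2$ and integrating the remaining randomness out, the two-stream exponential moment is dominated by the single-stream one $\mathbb{E}_{\mathbf{H}}[\exp(-c\rho\Vert\mathbf{E}_1\mathbf{\Delta}_1\Vert^2)]$.

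It then remains to show $\mathbb{E}_{\mathbf{H}}[\exp(-c\rho\Vert\mathbf{E}_1\mathbf{\Delta}_1\Vert^2)]\;\dot{\leq}\;\rho^{-(M+1)}$, which by the standard Laplace/Tauberian equivalence is the small-ball estimate $\Pr(\Vert\mathbf{E}_1\mathbf{\Delta}_1\Vert^2\le\epsilon)\;\dot{\leq}\;\epsilon^{M+1}$ as $\epsilon\to0$. Here the full-rankness hypothesis enters: since $\mathbf{\Delta}_1\neq\mathbf{0}$ and the block length $2T\ge M$, Definition \ref{def_rank} gives $Rank(\mathbf{\Delta}_1)=M$, so $\mathbf{\Delta}_1\mathbf{\Delta}_1^H\succeq c_0\mathbf{I}_M$ with $c_0:=\min_{\mathbf{\Delta}_1\neq\mathbf{0}}\lambda_{\min}(\mathbf{\Delta}_1\mathbf{\Delta}_1^H)>0$, a positive constant because the constellation is finite. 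Hence $\Vert\mathbf{E}_1\mathbf{\Delta}_1\Vert^2\ge c_0\Vert\mathbf{E}_1\Vert^2=c_0\,\Vert\mathbf{H}_{11}\mathbf{H}_{12}^{-1}\Vert^2/\Vert\mathbf{H}_{12}^{-1}\Vert^2$. Writing the SVD $\mathbf{H}_{12}=\mathbf{U}\boldsymbol{\Sigma}\mathbf{V}^H$ and using the unitary invariance of the law of the Gaussian matrix $\mathbf{H}_{11}$, this ratio has the same distribution as $R:=\big(\sum_{j=1}^M G_j/\ell_j\big)\big/\big(\sum_{j=1}^M 1/\ell_j\big)$, where $\ell_1\ge\cdots\ge\ell_M>0$ are the eigenvalues of the complex Wishart matrix $\mathbf{H}_{12}^H\mathbf{H}_{12}$ and $G_1,\dots,G_M$ are i.i.d. $\mathrm{Gamma}(M,1)$ variates (the squared column norms of $\mathbf{H}_{11}$) independent of the $\ell_j$. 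Thus it suffices to prove $\Pr(R\le\epsilon)\;\dot{\leq}\;\epsilon^{M+1}$.

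For this core estimate I would use only the two smallest eigenvalues. Bounding the denominator by $\sum_j 1/\ell_j\le M/\ell_M$ and keeping the $j=M$ and $j=M-1$ terms of the numerator yields $R\ge\tfrac1M\big(G_M+r\,G_{M-1}\big)$ with $r:=\ell_M/\ell_{M-1}\in(0,1]$. Since $G_M$ and the pair $(r,G_{M-1})$ are independent and both summands are nonnegative, $\Pr(R\le\epsilon)\le\Pr(G_M\le M\epsilon)\,\Pr(r\,G_{M-1}\le M\epsilon)$. The first factor is $\dot{\leq}\;\epsilon^{M}$ because the $\mathrm{Gamma}(M,1)$ c.d.f. behaves like $\epsilon^M$ near the origin. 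For the second factor I write $\Pr(r\,G_{M-1}\le M\epsilon)=\mathbb{E}_r[\Pr(G_{M-1}\le M\epsilon/r\mid r)]\le C\,\mathbb{E}_r[\min(1,(\epsilon/r)^M)]$, and combine it with the fact that the smallest eigenvalue of a square complex Gaussian matrix has a bounded density at $0$, so that $\Pr(r\le\delta)=\Pr(\ell_M\le\delta\,\ell_{M-1})\;\doteq\;\delta$; this makes the expectation $\doteq\epsilon$ and supplies the extra unit of diversity. Multiplying, $\Pr(R\le\epsilon)\;\dot{\leq}\;\epsilon^{M}\cdot\epsilon=\epsilon^{M+1}$, as required. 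The main obstacle, and the precise locus of the improvement, is this second factor: the crude bound $R\ge G_M/M$ alone yields only $\epsilon^{M}$, i.e. the diversity $M$ obtainable by a direct generalization of the proof in \cite{AbR_4x_X-Ch_TIT2014}, and extracting the additional unit requires retaining the second-smallest eigenvalue and quantifying the near-zero density of the smallest singular value of $\mathbf{H}_{12}$. I would finally verify that channel configurations in which several eigenvalues are simultaneously small contribute strictly higher powers of $\epsilon$, so they do not alter the exponent.
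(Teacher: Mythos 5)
Your proposal is correct and reaches the same exponent $M+1$, but its second half travels a genuinely different road from the paper's. The front end essentially parallels the paper: union bound, Chernoff bound, elimination of the cross-stream term (you do this by observing that the centred-Gaussian exponential moment $\mathbb{E}[\exp(-c\rho\Vert\mathbf{g}+\mathbf{z}\Vert^2)]$ is maximized at $\mathbf{z}=\mathbf{0}$, whereas the paper adds the two covariance contributions into $\mathbf{K}=\hat{\mathbf{K}}+\tilde{\mathbf{K}}$ and invokes Weyl's inequalities; both reductions to the single stream are valid), and the use of $\lambda_{\min}(\mathbf{\Delta}_1\mathbf{\Delta}_1^H)\geq c_0>0$ (the paper's $d_{min}$), guaranteed by full-rankness and the finiteness of the constellation, to reduce everything to the statistic $\Vert\mathbf{H}_{11}\mathbf{H}_{12}^{-1}\Vert^2/\Vert\mathbf{H}_{12}^{-1}\Vert^2$. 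Where you diverge is the tail estimate: the paper first integrates out the Gaussians to obtain $\mathbb{E}\bigl[\prod_{j}(1+c\rho\lambda_M/\lambda_j)^{-M}\bigr]$, then substitutes the joint Wishart eigenvalue density, changes variables to $\alpha_i=-\log\lambda_i/\log\rho$, and computes the Zheng--Tse outage exponent $\inf_{\boldsymbol{\alpha}}\sum_{i}(2(M-i)+1)\alpha_i+M(1+\alpha_i-\alpha_M)^{+}=M+1$; you instead prove the small-ball bound $\Pr(R\leq\epsilon)\,\dot{\leq}\,\epsilon^{M+1}$ directly by keeping only the two smallest eigenvalues and factoring $\Pr(G_M\leq M\epsilon)\doteq\epsilon^{M}$ against $\Pr(rG_{M-1}\leq M\epsilon)\doteq\epsilon$, the latter resting on the bounded, nonvanishing density of the smallest Wishart eigenvalue (equivalently of $r=\ell_M/\ell_{M-1}$) at the origin. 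Your route is more elementary --- no joint eigenvalue density and no optimization over an outage region --- and it makes the provenance of the extra unit of diversity transparent: it is precisely the $O(\delta)$ probability that the smallest singular value of $\mathbf{H}_{12}$ is atypically small. The paper's route is more systematic: the infimum automatically scans all eigenvalue configurations, whereas you must separately check (as you note) that events with several simultaneously small eigenvalues are subdominant, and it would extend more readily to computing the exact exponent. The facts you invoke without proof --- the Laplace-transform/small-ball equivalence at the $\doteq$ level, the order-one density of $\ell_M$ near zero for a square complex Gaussian matrix, and the integrability estimate $\mathbb{E}[r^{-M}\mathbf{1}(r>\epsilon)]\doteq\epsilon^{1-M}$ --- are standard and true, but should be cited or verified in a final write-up.
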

\begin{proof}
 The pair-wise codeword error probability that the transmitted codeword pair $(\bar{\mathbf{X}}_{11} , \bar{\mathbf{X}}_{21})$ is erroneously decoded to the codeword pair $(\bar{\mathbf{X}}'_{11} , \bar{\mathbf{X}}'_{21})$, denoted by $P_e(\triangle \bar{\mathbf{X}})$, is given by
\begin{align*} 
\nonumber
P_e(\triangle \bar{\mathbf{X}}) \leq \mathbb{E}_{\mathbf{H}_{ij},i,j=1,2}\left[Q\left(\sqrt{\frac{3\rho}{8}}\left\Vert \mathbf{H}_{11}\mathbf{V}_{11}\triangle \bar{\mathbf{X}}_{11} + \mathbf{H}_{21}\mathbf{V}_{21}\triangle \bar{\mathbf{X}}_{21}\right\Vert\right)\right],
\end{align*}
where $\triangle \bar{\mathbf{X}}_{11}=\bar{\mathbf{X}}'_{11}-\bar{\mathbf{X}}_{11}$ and $\triangle \bar{\mathbf{X}}_{21}=\bar{\mathbf{X}}'_{21}-\bar{\mathbf{X}}_{21}$.  Note that we can have the following three possibilities; 1) $\triangle \bar{\mathbf{X}}_{11} \neq \mathbf{0}$ and $\triangle \bar{\mathbf{X}}_{21}=\mathbf{0}$, 2) $\triangle \bar{\mathbf{X}}_{11} = \mathbf{0}$ and $\triangle \bar{\mathbf{X}}_{21}\neq \mathbf{0}$, 3) $\triangle \bar{\mathbf{X}}_{11} \neq \mathbf{0}$ and $ \triangle \bar{\mathbf{X}}_{21}\neq \mathbf{0}$. We shall prove the statement of the theorem only for the case $\triangle \bar{\mathbf{X}}_{11} \neq \mathbf{0}$, and the proofs for the rest of the cases follow similarly. Let 
\begin{eqnarray}\nonumber
\mathbf{h}^\prime&=&vec\left( \mathbf{H}_{11}\mathbf{V}_{11}\triangle \bar{\mathbf{X}}_{11} + \mathbf{H}_{21}\mathbf{V}_{21}\triangle \bar{\mathbf{X}}_{21}\right)\\
\label{eq_temp}
&=&\left( \triangle \bar{\mathbf{X}}_{11}^T \mathbf{V}_{11}^T \otimes \mathbf{I}_M \right)vec\left(\mathbf{H}_{11}\right)\hspace{-0.1cm}+\hspace{-0.1cm}\left( \triangle \bar{\mathbf{X}}_{21}^T \mathbf{V}_{21}^T \otimes \mathbf{I}_M \right)vec\left(\mathbf{H}_{21}\right).
\end{eqnarray}
Note that \eqref{eq_temp} is due to the simple observation that for $\mathbf{A} \in \mathbb{C}^{m\times n}, \mathbf{B} \in \mathbb{C}^{n \times p}$, it follows that $vec(\mathbf{AB}) = (\mathbf{B}^T \otimes \mathbf{I}_m)vec(\mathbf{A})$. So, we now have
\begin{equation*}\label{eqn-H'}
\Vert\mathbf{H}_{11}\mathbf{V}_{11}\triangle \bar{\mathbf{X}}_{11} + \mathbf{H}_{21}\mathbf{V}_{21}\triangle \bar{\mathbf{X}}_{21}\Vert =\Vert \mathbf{h}^\prime \Vert.
\end{equation*}
Conditioned on the random matrices $\mathbf{H}_{12}$ and $\mathbf{H}_{22}$ which the precoders $\mathbf{V}_{11}$ and $\mathbf{V}_{21}$ respectively depend on, $\mathbf{h}^\prime$ has the same distribution as $\left( \mathbf{K}^{\frac{1}{2}} \otimes\mathbf{I}_M\right)\mathbf{h}$ where \[\mathbf{K}\hspace{-0.1cm}=\hspace{-0.1cm}\left(\triangle \bar{\mathbf{X}}_{11}^T \mathbf{V}_{11}^T\right)\left(\triangle \bar{\mathbf{X}}_{11}^T \mathbf{V}_{11}^T\right)^H + \left( \triangle \bar{\mathbf{X}}_{21}^T \mathbf{V}_{21}^T\right)\left( \triangle \bar{\mathbf{X}}_{21}^T \mathbf{V}_{21}^T\right)^H, \] with $\mathbf{K}=\mathbf{K}^{\frac{1}{2}}{\mathbf{K}^{\frac{1}{2}}}^H $ (since $\mathbf{K}$ is non-negative definite) and $\mathbf{h} \sim \mathcal{CN}(\mathbf{0},\mathbf{I}_{2TM})$.

Using eigen-decomposition\footnote{Any eigen-decomposition that appears in this proof assumes that the eigenvalues are arranged in non-ascending order along the main diagonal of the diagonal eigenvalue matrix.} of $\mathbf{K} \in \mathbb{C}^{2T \times 2T}$ to obtain $\mathbf{K} = \mathbf{U}  \mathbf{\Lambda} \mathbf{U}^H$ with $\mathbf{\Lambda} = \textrm{diag}[\lambda_1,\cdots,\lambda_{2T}]$, we have $\mathbf{K}^{\frac{1}{2}}=\mathbf{U}  \mathbf{\Lambda}^{\frac{1}{2}}\mathbf{U}^H \in \mathbb{C}^{2T \times 2T}$. We now have
\begin{eqnarray*}\nonumber
P_e(\triangle \bar{\mathbf{X}}) & \leq & \mathbb{E}_{\mathbf{H}_{12},\mathbf{H}_{22}}\left[\mathbb{E}_{\mathbf{H}_{11},\mathbf{H}_{21}|{\mathbf{H}_{12},\mathbf{H}_{22}}}\left[Q\left(\sqrt{\frac{3\rho}{8}}\left\Vert \mathbf{h}^\prime \right\Vert\right)\right]\right]\\
 \nonumber
 & = &\mathbb{E}_{\mathbf{H}_{12},\mathbf{H}_{22}}\left[\mathbb{E}_{\mathbf{h}|{\mathbf{H}_{12},\mathbf{H}_{22}}}\left[Q\left(\sqrt{\frac{3\rho}{8}}\left\Vert \left( \mathbf{K}^{\frac{1}{2}} \otimes \mathbf{I}_M \right)\mathbf{h}\right\Vert\right)\right]\right].
 \end{eqnarray*}
 Now, denoting the entries of $\mathbf{h}$ by $ h_i$, $i=1,2,\cdots,2TM$, let $\mathbf{H} \in \mathbb{C}^{M \times 2T}$ be such that $vec(\mathbf{H}) = \mathbf{h}$. Then, 
 \begin{eqnarray*}
 \left\Vert \left( \mathbf{K}^{\frac{1}{2}} \otimes \mathbf{I}_M \right)\mathbf{h} \right\Vert^2 & = & \left\Vert vec\left(\mathbf{H}\left(\mathbf{K}^{\frac{1}{2}}\right)^T\right) \right\Vert^2\\ 
 & = & \left\Vert \mathbf{H}\left(\mathbf{K}^{\frac{1}{2}}\right)^T \right \Vert^2 = \left\Vert \mathbf{K}^{\frac{1}{2}}\mathbf{H}^T \right \Vert^2 = \sum_{i=1}^{M}\left\Vert \mathbf{K}^{\frac{1}{2}}\mathbf{h}_i \right\Vert^2
 \end{eqnarray*}
 where  $\mathbf{h}_i := [h_{i},h_{i+M},h_{i+2M},\cdots,h_{i+(2T-1)M}]^T \in \mathbb{C}^{2T\times1}$, $i=1,\cdots,M$, are the columns of $\mathbf{H}^T$. Therefore,
 \begin{eqnarray*}
 P_e(\triangle \bar{\mathbf{X}}) & = & \mathbb{E}_{\mathbf{H}_{12},\mathbf{H}_{22}}\left[\mathbb{E}_{\mathbf{h}_i, i = 1, \cdots,M|{\mathbf{H}_{12},\mathbf{H}_{22}}}\left[Q\left(\sqrt{\frac{3\rho}{8}\sum_{i=1}^{M}\left\Vert \mathbf{K}^{\frac{1}{2}}\mathbf{h}_i \right\Vert^2}\right)\right]\right] \\
  & \leq & \mathbb{E}_{\mathbf{H}_{12},\mathbf{H}_{22}} \left[\mathbb{E}_{\mathbf{h}_i, i = 1, \cdots,M|{\mathbf{H}_{12},\mathbf{H}_{22}}}\left[Q\left(\sqrt{\frac{3\rho}{8}\sum_{i=1}^{M}\mathbf{h}_i^H {\mathbf{K}^{\frac{1}{2}}}^H \mathbf{K}^{\frac{1}{2}} \mathbf{h}_i }\right)\right]\right]\\
&= &\mathbb{E}_{\mathbf{H}_{12},\mathbf{H}_{22}}\left[\mathbb{E}_{\mathbf{h}_i, i = 1, \cdots,M|{\mathbf{H}_{12},\mathbf{H}_{22}}}\left[Q\left(\sqrt{\frac{3\rho}{8}\sum_{i=1}^{M}\mathbf{h}_i^H \mathbf{U} \mathbf{\Lambda} \mathbf{U}^H  \mathbf{h}_i }\right)\right]\right]
\end{eqnarray*}
where ${\mathbf{K}^{\frac{1}{2}}}^H \mathbf{K}^{\frac{1}{2}} = \mathbf{U} \mathbf{\Lambda} \mathbf{U}^H $. Since $\mathbf{U}$ is unitary, the distribution of $\mathbf{U}^H\mathbf{h}_i$ is the same as that of $\mathbf{h}_i$ so that
\begin{eqnarray*}
P_e(\triangle \bar{\mathbf{X}}) &\leq &\mathbb{E}_{\mathbf{H}_{12},\mathbf{H}_{22}}\left[\mathbb{E}_{\mathbf{h}_i, i = 1, \cdots,M|{\mathbf{H}_{12},\mathbf{H}_{22}}}\left[Q\left(\sqrt{\frac{3\rho}{8}\sum_{i=1}^{M}\left(\mathbf{h}_i^H \mathbf{\Lambda} \mathbf{h}_i \right)}\right)\right]\right]\\
&=&\mathbb{E}_{\mathbf{H}_{12},\mathbf{H}_{22}}\left[\mathbb{E}_{\mathbf{h}_i, i = 1, \cdots,M|{\mathbf{H}_{12},\mathbf{H}_{22}}}\left[Q\left(\sqrt{\frac{3\rho}{8}\sum_{i=1}^{M}\sum_{j=1}^{2T} \lambda_j \vert h_{(j-1)M+i}\vert^2 }\right)\right]\right].
\end{eqnarray*}
Let $\mathbf{K} = \hat{\mathbf{K}} + \tilde{\mathbf{K}}$ where $\hat{\mathbf{K}}:= \left(\triangle \bar{\mathbf{X}}_{11}^T \mathbf{V}_{11}^T\right)\left(\triangle \bar{\mathbf{X}}_{11}^T \mathbf{V}_{11}^T\right)^H $ and $\tilde{\mathbf{K}} := \left( \triangle \bar{\mathbf{X}}_{21}^T \mathbf{V}_{21}^T\right)\left( \triangle \bar{\mathbf{X}}_{21}^T \mathbf{V}_{21}^T\right)^H $. Further, let the eigen-decomposition of $\hat{\mathbf{K}}$ be $\hat{\mathbf{K}} = \hat{\mathbf{U}} \hat{\mathbf{\Lambda}} \hat{\mathbf{U}}^{H}$ and that of $\tilde{\mathbf{K}}$ be $\tilde{\mathbf{K}} = \tilde{\mathbf{U}} \tilde{\mathbf{\Lambda}} \tilde{\mathbf{U}}^{H}$. From Weyl's inequalities\footnote{Weyl's inequalities relate the eigenvalues of the sum of two Hermitian matrices to the eigenvalues of the individual matrices.} (see Section III.2, Page $62$ of \cite{Bha-book}), $\lambda_i \geq \hat{\lambda}_i$, $\forall i = 1,\cdots, 2T$, where $\hat{\mathbf{\Lambda}} = \textrm{diag}[\hat{\lambda}_1,\cdots,\hat{\lambda}_{2T}]$ and $\tilde{\mathbf{\Lambda}}= \textrm{diag}[\tilde{\lambda}_1,\cdots,\tilde{\lambda}_{2T}]$. So, we have
\begin{eqnarray*}
\nonumber
P_e(\triangle \bar{\mathbf{X}}) &\leq & \mathbb{E}_{\mathbf{H}_{12}}
\left[\mathbb{E}_{\mathbf{h}_i, i = 1, \cdots,M|{\mathbf{H}_{12}}}\left[Q\left(\sqrt{\frac{3\rho}{8}\sum_{i=1}^{M}\sum_{j=1}^{2T} \hat{\lambda}_j \vert  h_{(j-1)M+i}\vert^2 }\right)\right]\right]\\
&=&\mathbb{E}_{\mathbf{H}_{12}}\left[\mathbb{E}_{\mathbf{h}_i, i = 1, \cdots,M|{\mathbf{H}_{12}}}\left[Q\left(\sqrt{\frac{3\rho}{8}\sum_{i=1}^{M}\sum_{j=1}^{M} \hat{\lambda}_j \vert  h_{(j-1)M+i}\vert^2 }\right)\right]\right].
\end{eqnarray*}
The last step follows due to the following reason; $\mathbf{V}^T_{11}$ is almost surely invertible and $\triangle \bar{\mathbf{X}}^T_{11}$ is of rank $M$ so that $2T \geq M$. So, $\triangle \bar{\mathbf{X}}^T_{11}\mathbf{V}^T_{11}$ is of rank $M$ and has $M$ non-zero singular values almost surely. Hence, the first $M$ eigenvalues $\hat{\lambda}_1,\cdots,\hat{\lambda}_M$ of $\hat{\mathbf{K}}$ (which are squares of the  singular values of $\triangle \bar{\mathbf{X}}^T_{11}\mathbf{V}^T_{11}$) are non-zero almost surely and the rest of the eigenvalues $\hat{\lambda}_{M+1},\cdots,\hat{\lambda}_{2T}$ are always zero.

Noting that the non-zero eigenvalues of $\hat{\mathbf{K}}$ and  $\hat{\mathbf{K}}':=\left(\triangle \bar{\mathbf{X}}^T_{11} \mathbf{V}^T_{11}\right)^H\left(\triangle \bar{\mathbf{X}}^T_{11} \mathbf{V}^T_{11}\right)$ are the same, and using the fact that multiplication by the unitary eigenvector matrix of $\hat{\mathbf{K}}'$ does not change the distribution of $\mathbf{h}_i' := [h_{i},h_{i+M},h_{i+2M},\cdots,h_{i+(M-1)M}]^T$ (i.e., the first $M$ entries of $\mathbf{h}_i$), we have
\begin{align*}
P_e(\triangle \bar{\mathbf{X}})\leq \mathbb{E}_{\mathbf{H}_{12}}\left[\mathbb{E}_{\mathbf{h}'_i, i = 1, \cdots,M|{\mathbf{H}_{12}}}\left[Q\left(\sqrt{\frac{3\rho}{8}\sum_{i=1}^{M} \mathbf{h}'^H_{i}{\mathbf{V}^T_{11}}^H{(\triangle\bar{\mathbf{X}}_{11}\triangle\bar{\mathbf{X}}^H_{11})}^T\mathbf{V}^T_{11}\mathbf{h}'_i} \right)\right]\right].
\end{align*}Let the eigen-decomposition of $(\triangle\bar{\mathbf{X}}_{11}\triangle\bar{\mathbf{X}}^H_{11})$ be given by $(\triangle\bar{\mathbf{X}}_{11}\triangle\bar{\mathbf{X}}^H_{11})=\mathbf{U}'\mathbf{\Lambda}'\mathbf{U}'^H$, where $\mathbf{\Lambda'}=\text{diag}(\lambda'_1,\cdots,\lambda'_M)$. Multiplication by the unitary matrix $\mathbf{U}'$ does not change the distribution of $\mathbf{V}_{11}$ because $\mathbf{V}_{11}\mathbf{U}'=\frac{{(\mathbf{U}'^H \mathbf{H}_{12})}^{-1}}{\Vert {(\mathbf{U}'^H \mathbf{H}_{12})}^{-1}\Vert}$ and the distribution of $\mathbf{H}_{12}$ is invariant to unitary matrix multiplication. Thus, we have
\begin{align*}
P_e(\triangle \bar{\mathbf{X}})\leq\mathbb{E}_{\mathbf{H}_{12}}\left[\mathbb{E}_{\mathbf{h}'_i, i = 1, \cdots,M|{\mathbf{H}_{12}}}\left[Q\left(\sqrt{\frac{3\rho}{8}\sum_{i=1}^{M} \left(\mathbf{h}'^H_{i}{\mathbf{V}^T_{11}}^H\right)\mathbf{\Lambda'}  \left(\mathbf{V}^T_{11}\mathbf{h}'_i\right)} \right)\right]\right].
\end{align*}
Denoting the least eigenvalue $\lambda'_M$  by $d_{min} >0$ (because $\triangle\bar{\mathbf{X}}_{11}$ is of rank $M$) , we further upper bound the above inequality by
\begin{align*}
P_e(\triangle \bar{\mathbf{X}})\leq\mathbb{E}_{\mathbf{H}_{12}}\left[\mathbb{E}_{\mathbf{h}'_i, i = 1, \cdots,M|{\mathbf{H}_{12}}}\left[Q\left(\sqrt{\frac{3\rho d_{min}}{8}\sum_{i=1}^{M} \mathbf{h}'^H_{i}{\mathbf{V}^T_{11}}^H\mathbf{V}^T_{11}\mathbf{h}'_i} \right)\right]\right].
\end{align*}
Now, using the relations $\mathbf{V}_{11}\mathbf{V}_{11}^H=\mathbf{V} \mathbf{\Lambda}^{(\mathbf{V})} \mathbf{V}^H$ (obtained upon eigen decomposition), $Q(x) \leq \frac{1}{2}e^{\frac{-x^2}{2}}$ for $x \geq 0$, and using the fact the unitary matrix multiplication does not change the distribution of $\mathbf{h}'^H_{i}$, we have
\begin{eqnarray}\nonumber
P_e(\triangle \bar{\mathbf{X}})&\leq&\mathbb{E}_{\mathbf{H}_{12}}\left[\mathbb{E}_{\mathbf{h}'_i, i = 1, \cdots,M|{\mathbf{H}_{12}}}\left[ e^{-\frac{3\rho d_{min}\sum_{i=1}^{M}\sum_{j=1}^{M} {\lambda}^{(\mathbf{V})}_j \vert  h_{(j-1)M+i}\vert^2 }{16}}\right]\right]\\
\label{eq_temp1}
&= & \mathbb{E}_{\mathbf{H}_{12}}\left[\frac{1}{\prod_{j=1}^M\left(1+\frac{3\rho d_{min} {\lambda}^{\mathbf{(V)}}_j }{16}\right)^M}\right].
\end{eqnarray}Note that \eqref{eq_temp1} is arrived at because each of the entries of $\mathbf{h}_i'$ has the complex normal distribution and hence, the square of its absolute value is exponentially distributed with mean 1. The eigenvalues $\lambda^{\mathbf{(V)}}_j$ and the eigenvalues of $\mathbf{H}^H_{12} \mathbf{H}_{12}$ denoted in the non-increasing order by $\lambda_1\geq\cdots\geq \lambda_M \geq 0$ are related as
\begin{align*}
\lambda^{\mathbf{(V)}}_j = \frac{\frac{1}{\lambda_{M-1+j}}}{\sum_{j'=1}^{M}\frac{1}{\lambda_{j'}}} \geq \frac{\lambda_M}{M \lambda_{M+1-j}}.
\end{align*}Hence, by the union bound, the average codeword error probability $P_e$ is upper-bounded as 
\begin{align}\label{eq_div_gain_UB}
 P_e < \vert \bar{\mathcal{X}} \vert^2 \left(\mathbb{E}_{\lambda_j(\mathbf{H}_{12}),j=1,\cdots,M}\left[\frac{1}{\prod_{j=1}^M\left(1+\frac{3\rho d_{min}}{16M}\left(\frac{\lambda_M}{\lambda_j}\right)\right)^M}\right]\right).
\end{align}

Let $\mathbf{H}$ be an $M\times M$ sized matrix with i.i.d. entries $h_{ij} \sim \mathcal{CN}(0,1)$. Denoting the eigenvalues of the complex Wishart matrix $\mathbf{H}^H\mathbf{H}$ by $\lambda_i$, $i=1,\cdots,M$, with $\infty \geq \lambda_1 \geq \cdots \geq \lambda_M \geq 0$, the joint pdf of $\lambda_i$, $i=1,\cdots,M$, is given by
\begin{equation*}
 f_{\boldsymbol{\lambda}}(\lambda_1,\cdots,\lambda_M) = C\prod_{i<j}(\lambda_j -\lambda_i)^2e^{-\sum_{i=1}^M\lambda_i}
\end{equation*}
where $C$ is a normalizing constant. Let $\alpha_i := -\frac{\log \lambda_i}{\log \rho}$. Now, the joint pdf of $\alpha_i$, for $-\infty \leq \alpha_1 \leq \cdots \leq \alpha_M \leq \infty$, is given by
\begin{align*}
 f_{\boldsymbol{\alpha}}(\alpha_1,\cdots,\alpha_M) = C(\log \rho)^M\prod_{i=1}^M \rho^{-\alpha_i}\prod_{i<j}(\rho^{-\alpha_j}-\rho^{-\alpha_i})^2e^{-\sum_{i=1}^M\rho^{-\alpha_i}}.
 \end{align*}

\noindent From (\ref{eq_div_gain_UB}), we have
\begin{align*}
P_e <C\vert \mathcal{X} \vert^2 \bigints_{\boldsymbol{\alpha}} \frac{(\log \rho)^M\prod_{i=1}^M \rho^{-\alpha_i}\prod_{i<j}(\rho^{-\alpha_j}-\rho^{-\alpha_i})^2e^{-\sum_{i=1}^M\rho^{-\alpha_i}}}{\prod_{j=1}^M\left(1+\frac{3\rho d_{min}}{16M}\left(\frac{\rho^{-\alpha_M}}{\rho^{-\alpha_j}}\right)\right)^M}\mathrm{d}\boldsymbol{\alpha}.
\end{align*}
We proceed to analyze the diversity gain achievable using the methodology employed in \cite{ZhT}. Noting that for any $\alpha_i <0$, the integrand has an exponential fall with $\rho$, it is clear that 
\begin{eqnarray}\nonumber
 P_e & \doteq & \bigints_{\boldsymbol{\alpha \geq \mathbf{0}}} \frac{\prod_{i=1}^M \rho^{-\alpha_i}\prod_{i<j}(\rho^{-\alpha_j}-\rho^{-\alpha_i})^2}{\prod_{i=1}^M\left(1+\frac{3\rho d_{min}}{16M}\left(\frac{\rho^{-\alpha_M}}{\rho^{-\alpha_i}}\right)\right)^M}\mathrm{d}\boldsymbol{\alpha}\\\nonumber
 & \leq & \bigints_{\boldsymbol{\alpha \geq \mathbf{0}}} \frac{\prod_{i=1}^M \rho^{-\alpha_i}\prod_{i<j}(\rho^{-\alpha_j}-\rho^{-\alpha_i})^2}{\prod_{i=1}^M\left(\frac{3 d_{min}}{16M}\rho^{(1+ \alpha_i-\alpha_M )^{+}}\right)^M}\mathrm{d}\boldsymbol{\alpha}\\\nonumber
 & \doteq & \bigints_{\boldsymbol{\alpha \geq \mathbf{0}}} \frac{\prod_{i=1}^M \rho^{-\alpha_i}\prod_{i<j}(\rho^{-\alpha_j}-\rho^{-\alpha_i})^2}{\prod_{i=1}^M\left(\rho^{(1+\alpha_M - \alpha_i)^{+}}\right)^M}\mathrm{d}\boldsymbol{\alpha}\\\label{eq_temp2}
 & \leq & \bigints_{\boldsymbol{\alpha \geq \mathbf{0}}} \frac{\prod_{i=1}^M \rho^{-\alpha_i}\prod_{j=2}^M\prod_{i<j}\rho^{-2\alpha_i}}{\prod_{i=1}^{M}\rho^{M(1+\alpha_M - \alpha_i)^{+}}}\mathrm{d}\boldsymbol{\alpha}
\end{eqnarray}
where, with the abuse of notation, $\boldsymbol{\alpha} \geq \mathbf{0}$ implies that $\alpha_i \geq 0, ~ \forall i = 1, \cdots, M$. Note that \eqref{eq_temp2} follows because for $i < j$, we have $\alpha_i \leq \alpha_j$. Therefore,
\begin{eqnarray*}
 P_e & \dot{\leq} & \bigints_{\boldsymbol{\alpha \geq \mathbf{0}}} \prod_{i=1}^M \rho^{-\alpha_i}\prod_{i=1}^M\rho^{-2(M-i)\alpha_i}\prod_{i=1}^M \rho^{-M(1+\alpha_i-\alpha_M)^{+}}\mathrm{d}\boldsymbol{\alpha} \\ 
 & = & \int_{\boldsymbol{\alpha \geq \mathbf{0}}} \rho^{-\sum_{i=1}^M\left(2(M-i)+1\right)\alpha_i +M(1+\alpha_i-\alpha_M)^{+}}\mathrm{d}\boldsymbol{\alpha}\\
 & \doteq & \rho^{-d}
 \end{eqnarray*}
where, from \cite[Theorem 4]{ZhT}, 
\begin{equation*}
 d = \inf_{\boldsymbol{\alpha} \in \mathcal{O}} \sum_{i=1}^M \left(2(M-i)+1\right)\alpha_i +M(1+\alpha_i-\alpha_M)^{+}
\end{equation*}
with $\mathcal{O} = \left\{ \boldsymbol{\alpha} ~ \vert ~ 0 \leq \alpha_1 \leq \alpha_2 \leq \cdots \leq \alpha_M \leq \infty \right\}$. It is easy to verify that the infimum occurs when $\alpha_M =1$ and $\alpha_1 = \alpha_2 = \cdots = \alpha_{M-1} = 0$ so that $d = M+1$. Therefore, we have 
\begin{equation*}
 d_g =-\lim_{\rho \to \infty}\frac{\log P_e}{\log \rho} \geq d = M + 1,
\end{equation*}
which proves the theorem.
\end{proof}

Having shown that the proposed linear transmission scheme guarantees a diversity gain of $M+1$, we now proceed to analyze its sum-rate. Our choice of STBC with the CC-property is the one constructed using STBCs from CDA. Hence, $\bar{\mathbf{X}}_{ij}$, $i,j=1,2$ (with reference to \eqref{eqn-STBC_Tx}), is of the form
\begin{equation}\label{eq_stbc_proposed}
 \bar{\mathbf{X}}_{ij} = \left\{ [\mathbf{R}_{ij} ~~\mathbf{PR}_{ij}] ~ \vert ~ \mathbf{R}_{ij} \in \mathcal{X} \right\}
\end{equation}
where $\mathcal{X}$ is an $(M,M)$ STBC from CDA, and $\mathbf{P}\in \mathbb{C}^{M \times M}$ is a unitary matrix that has {\it no eigenvalue with algebraic multiplicity exceeding $\left\lfloor\frac{M}{2} \right\rfloor$}.

\begin{theorem} \label{thm-max_sum_rate_Mx}
 The proposed linear transmission scheme that uses STBCs from CDA with the unitary matrix $\mathbf{P}$ in \eqref{eq_stbc_proposed} having no eigenvalue with algebraic multiplicity greater than $\left\lfloor\frac{M}{2} \right\rfloor$, has a sum-rate of $\frac{4M}{3}$ cspcu, and hence achieves the sum-DoF of the $(2\times 2,M)$ X-Network.
\end{theorem}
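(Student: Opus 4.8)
The plan is to verify the two requirements of Definition~\ref{def8} in turn: first the symbol-counting that produces $\frac{4M}{3}$ cspcu, and then the almost-sure rank condition that certifies zero-forcing recoverability of the desired symbols at each receiver. For the counting, each $\bar{\mathbf{X}}_{ij}$ of the form \eqref{eq_stbc_proposed} carries exactly the $M^2$ symbols of the underlying CDA codeword $\mathbf{R}_{ij}$, so the four messages $W_{11},W_{12},W_{21},W_{22}$ jointly convey $4M^2$ complex symbols; since the transmitted matrices $\mathbf{X}_i$ span $3T=3M$ channel uses, the sum-rate equals $\frac{4M^2}{3M}=\frac{4M}{3}$ \emph{provided} the rank condition holds. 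It therefore remains to establish that condition at Rx-1 (Rx-2 follows by symmetry), which is where the hypothesis on $\mathbf{P}$ is needed.

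Starting from the interference-free model \eqref{eqn-rx1_IF} and writing $\mathbf{G}_1:=\mathbf{H}_{11}\mathbf{V}_{11}$, $\mathbf{G}_2:=\mathbf{H}_{21}\mathbf{V}_{21}$, I would vectorize $\mathbf{Y}'_1$ using $vec(\mathbf{AB})=(\mathbf{I}\otimes\mathbf{A})vec(\mathbf{B})$. Since $\bar{\mathbf{X}}_{i1}=[\mathbf{R}_{i1}~\mathbf{P}\mathbf{R}_{i1}]$, the two column-blocks of $\mathbf{Y}'_1$ give (after a trivial noise-whitening)
\[
vec(\mathbf{Y}'_1)=\sqrt{\tfrac{3\rho}{4}}\,\mathbf{M}\begin{bmatrix}vec(\mathbf{R}_{11})\\ vec(\mathbf{R}_{21})\end{bmatrix}+vec(\mathbf{N}'_1),\quad \mathbf{M}=\begin{bmatrix}\mathbf{I}_M\otimes\mathbf{G}_1 & \mathbf{I}_M\otimes\mathbf{G}_2\\ \mathbf{I}_M\otimes\mathbf{G}_1\mathbf{P} & \mathbf{I}_M\otimes\mathbf{G}_2\mathbf{P}\end{bmatrix}.
\]
Because $vec(\mathbf{R}_{i1})=\mathbf{G}_C\mathbf{x}_{i1}$ with $\mathbf{G}_C$ invertible, recovering the $2M^2$ desired symbols by zero-forcing amounts to showing the $2M^2\times 2M^2$ matrix $\mathbf{M}$ is nonsingular almost surely. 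Conjugating $\mathbf{M}$ by the fixed perfect-shuffle permutation that swaps the $\mathbf{I}_M$ and the outer $2\times 2$ Kronecker factors turns it into $\mathbf{I}_M\otimes\mathbf{N}$ with $\mathbf{N}=\begin{bmatrix}\mathbf{G}_1 & \mathbf{G}_2\\ \mathbf{G}_1\mathbf{P} & \mathbf{G}_2\mathbf{P}\end{bmatrix}$, so $\det\mathbf{M}=(\det\mathbf{N})^M$. Factoring $\mathbf{N}=\mathrm{diag}[\mathbf{G}_1,\mathbf{G}_1]\begin{bmatrix}\mathbf{I}_M & \mathbf{W}\\ \mathbf{P} & \mathbf{W}\mathbf{P}\end{bmatrix}$ with $\mathbf{W}:=\mathbf{G}_1^{-1}\mathbf{G}_2$ and taking the Schur complement of the invertible top-left block yields $\det\mathbf{N}=(\det\mathbf{G}_1)^2\det(\mathbf{W}\mathbf{P}-\mathbf{P}\mathbf{W})$. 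Everything thus reduces to the commutator $[\mathbf{W},\mathbf{P}]=\mathbf{W}\mathbf{P}-\mathbf{P}\mathbf{W}$ being nonsingular almost surely.

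I would next set up a genericity argument. Since $\mathbf{W}=\frac{\Vert\mathbf{H}_{12}^{-1}\Vert}{\Vert\mathbf{H}_{22}^{-1}\Vert}\,\mathbf{H}_{12}\mathbf{H}_{11}^{-1}\mathbf{H}_{21}\mathbf{H}_{22}^{-1}$ is a product of independent continuously-distributed invertible matrices, its law is absolutely continuous on $\mathbb{C}^{M\times M}$, and the positive scalar prefactor does not affect singularity of $[\mathbf{W},\mathbf{P}]$. As $\det([\mathbf{W},\mathbf{P}])$ is a polynomial in the entries of $\mathbf{W}$, it suffices to exhibit a single $\mathbf{W}_0$ making the commutator nonsingular: then its zero set is Lebesgue-null and the commutator is nonsingular almost surely. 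Diagonalizing the unitary $\mathbf{P}=\mathbf{U}\mathbf{D}\mathbf{U}^H$ and putting $\mathbf{W}'=\mathbf{U}^H\mathbf{W}\mathbf{U}$ reduces this to $\mathbf{D}=\mathrm{diag}(d_1,\dots,d_M)$, where $[\mathbf{W}',\mathbf{D}]_{kl}=W'_{kl}(d_l-d_k)$; hence $[\mathbf{W}',\mathbf{D}]$ is a ``block-hollow'' matrix whose diagonal blocks (indexed by the equal-eigenvalue groups, of sizes equal to the algebraic multiplicities $m_1,\dots,m_p$ of $\mathbf{P}$) vanish, while its off-diagonal blocks are free up to the nonzero scalars $d_l-d_k$.

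The main obstacle is the final combinatorial step: characterizing when this block-hollow pattern admits a nonsingular realization. I would model it as a bipartite graph on rows/columns with an edge $(k,l)$ whenever $k,l$ lie in different eigenvalue groups, and show via Hall's theorem that a perfect matching $\sigma_0$ exists iff each group obeys $m_s\le M-m_s$, i.e.\ $m_s\le\left\lfloor\tfrac{M}{2}\right\rfloor$ — precisely the hypothesis on $\mathbf{P}$. Given such a matching, choosing $\mathbf{W}'_0$ supported on $\{(k,\sigma_0(k))\}$ with $W'_{0,k\sigma_0(k)}=(d_{\sigma_0(k)}-d_k)^{-1}$ makes $[\mathbf{W}'_0,\mathbf{D}]$ equal to the permutation matrix of $\sigma_0$, which is nonsingular; conversely, if some $m_s>\left\lfloor\tfrac{M}{2}\right\rfloor$, those $m_s$ rows have all nonzero entries confined to the $M-m_s<m_s$ columns outside the group, forcing singularity for every $\mathbf{W}$. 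Combining the steps, $\mathbf{M}$ — and hence $\mathbf{A}_1=\sqrt{3\rho/4}\,\mathbf{M}\,\mathrm{diag}[\mathbf{G}_C,\mathbf{G}_C]$ — has rank $2M^2$ almost surely, so Definition~\ref{def8} is satisfied with $p_1=k_1'=2M^2$ (and identically at Rx-2). Together with the counting, the sum-rate is $\frac{4M}{3}$ cspcu, which by the remark following Definition~\ref{def8} achieves the sum-DoF of the $(2\times 2,M)$ X-Network.
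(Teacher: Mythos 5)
Your proposal is correct and its overall architecture coincides with the paper's: both vectorize the interference-free observation, peel off the CDA generator matrix $\mathbf{G}_C$ and the Kronecker structure to reduce everything to the nonsingularity of the $2M\times 2M$ matrix $\begin{bmatrix}\mathbf{G}_1 & \mathbf{G}_2\\ \mathbf{G}_1\mathbf{P} & \mathbf{G}_2\mathbf{P}\end{bmatrix}$, invoke the ``a polynomial is identically zero or almost everywhere nonzero'' genericity principle, and finally reduce to the commutator $\mathbf{A}\mathbf{P}-\mathbf{P}\mathbf{A}$ being nonsingular for some $\mathbf{A}$ (your Schur-complement factorization through $\mathbf{W}=\mathbf{G}_1^{-1}\mathbf{G}_2$ is equivalent to the paper's substitution $\mathbf{G}_1=\mathbf{I}_M$, $\mathbf{A}=\mathbf{G}_2$). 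Where you genuinely diverge is in the proof of the key lemma characterizing when such an $\mathbf{A}$ exists. The paper (Lemma \ref{lemma2}, Appendix \ref{appen_lem_full_rank}) diagonalizes $\mathbf{P}$, observes the same block-hollow pattern $(\lambda_j-\lambda_i)a_{ij}$, and then exhibits a nonsingular realization by forcing two complementary square submatrices $\mathbf{C}_1\in\mathbb{C}^{\lceil M/2\rceil\times\lceil M/2\rceil}$ and $\mathbf{C}_2\in\mathbb{C}^{\lfloor M/2\rfloor\times\lfloor M/2\rfloor}$ to be nonsingular via a counting lemma on matrices containing an $m\times n$ null block with $m+n\le p$. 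You instead set up a bipartite graph on rows and columns with edges between distinct eigenvalue groups, verify Hall's condition (which collapses exactly to $m_s\le M-m_s$, i.e.\ $m_s\le\lfloor M/2\rfloor$), and realize the commutator as a permutation matrix supported on a perfect matching. Your matching argument is arguably more transparent and makes the threshold $\lfloor M/2\rfloor$ appear for a structural reason rather than from submatrix bookkeeping; the paper's construction is more elementary in that it avoids invoking Hall's theorem. Both yield the same ``if and only if'' statement, and your necessity direction (rows of an oversized group confined to too few columns) is the same rank-deficiency observation the paper uses. Two minor points to tighten in a final write-up: the absolute continuity of the law of $\mathbf{W}$ deserves a one-line justification (condition on $\mathbf{H}_{11},\mathbf{H}_{12},\mathbf{H}_{22}$ and use that $\mathbf{H}_{21}\mapsto\mathbf{H}_{12}\mathbf{H}_{11}^{-1}\mathbf{H}_{21}\mathbf{H}_{22}^{-1}$ is an invertible linear map), or simply pull the determinant back to a polynomial in the raw channel entries as the paper does; and the perfect-shuffle conjugation giving $\det\mathbf{M}=(\det\mathbf{N})^M$ should be stated as a permutation similarity so the determinant is preserved exactly.
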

\begin{proof}
Let $\mathcal{T}_1 := \{1,2,\cdots,M\}, \mathcal{T}_2 := \{M+1,2,\cdots,2M\}, \mathcal{T}_3 := \{2M+1,2M+2,\cdots,3M\}$. The received symbol matrix at Rx-$1$ in (\ref{eqn-rx1}) can be represented as
\begin{align*}
		 \left[\begin{array}{c}
                       \mathbf{Y}_1(\mathcal{T}_1)\\
                       \mathbf{Y}_1(\mathcal{T}_2)\\
                       \mathbf{Y}_1(\mathcal{T}_3)\\
                      \end{array}\right] =  \sqrt{\frac{3\rho}{4}}\left[\begin{array}{cccc}
                      \mathbf{H}_{11}\mathbf{V}_{11} & \mathbf{H}_{21}\mathbf{V}_{21} & \mathbf{0} & \mathbf{0} \\
                      \mathbf{H}_{11}\mathbf{V}_{11}\mathbf{P} & \mathbf{H}_{21}\mathbf{V}_{21}\mathbf{P} &\alpha_1 \mathbf{I}_M &\alpha_2 \mathbf{I}_M\\
                      \mathbf{0} & \mathbf{0} & \alpha_1\mathbf{P} & \alpha_2\mathbf{P}\\
                      \end{array}                      
                      \right]\left[\begin{array}{c}
                       \mathbf{R}_{11}\\
                        \mathbf{R}_{21}\\
                        \mathbf{R}_{12}\\ 
                         \mathbf{R}_{22}\\
                      \end{array}\right]+ \left[\begin{array}{c}
                       \mathbf{N}_1(\mathcal{T}_1)\\
                       \mathbf{N}_1(\mathcal{T}_2)\\
                       \mathbf{N}_1(\mathcal{T}_3)\\
                      \end{array}\right],
\end{align*}
where $\alpha_1 := \frac{1}{\Vert \mathbf{H}_{11}^{-1} \Vert}$ and $\alpha_2 := \frac{1}{\Vert \mathbf{H}_{21}^{-1} \Vert}$. Now, the processed interference-free received symbol matrix $\mathbf{Y}'_1$ is given by
\begin{eqnarray} \nonumber
\mathbf{Y}'_1	& = &	 \left[\begin{array}{c}
                       \mathbf{Y}'_1(\mathcal{T}_1)\\
                       \mathbf{Y}'_1(\mathcal{T}_2)\\
                      \end{array}\right] = \mathbf{F}\left[\begin{array}{c}
                       \mathbf{Y}_1(\mathcal{T}_1)\\
                       \mathbf{Y}_1(\mathcal{T}_2)\\
                       \mathbf{Y}_1(\mathcal{T}_3)\\
                      \end{array}\right]\\ \label{eqn-eff_noise}
                      & = &\sqrt{\frac{3\rho}{4}}\underbrace{\left[\begin{array}{cc}
                           \mathbf{H}_{11}\mathbf{V}_{11} & \mathbf{H}_{21}\mathbf{V}_{21}  \\ 
\mathbf{H}_{11}\mathbf{V}_{11}\mathbf{P} & \mathbf{H}_{21}\mathbf{V}_{21}\mathbf{P}\\
\end{array}\right]}_{\mathbf{H}_1}\left[\begin{array}{c}
                       \mathbf{R}_{11}\\
                        \mathbf{R}_{21}\\                        
                      \end{array}\right]+\left[\begin{array}{c}
                       \mathbf{N}_1(\mathcal{T}_1)\\
                       \mathbf{N}_1(\mathcal{T}_2)-\mathbf{P}^H\mathbf{N}_1(\mathcal{T}_3)\\                       
                      \end{array}\right],
\end{eqnarray}
where the interference zero-forcing matrix $\mathbf{F}$ is given by
\begin{equation*}
 \mathbf{F}= \left[\begin{array}{ccr}
                      \mathbf{I}_M & \mathbf{0} & \mathbf{0} \\
                      \mathbf{0} & \mathbf{I}_M & -\mathbf{P}^H\\
                      \end{array}\right].
\end{equation*}
Adding a Gaussian noise matrix to \eqref{eqn-eff_noise} (which one can note doesn't affect the claim about sum-rate), we now assume that the entries of the effective noise matrix in (\ref{eqn-eff_noise}) are i.i.d. standard complex normal random variables.

Since $\mathbf{R}_{11}$ and $\mathbf{R}_{21}$ are codewords of the same $(M,M)$ STBC $\mathcal{X}$ which is obtained from CDA, $vec(\mathbf{R}_{11}) = \mathbf{G}_C\mathbf{x}_{11}$, $vec(\mathbf{R}_{21}) = \mathbf{G}_C\mathbf{x}_{21}$ where $\mathbf{G}_C \in \mathbb{C}^{M^2 \times M^2}$ is the complex generator matrix of $\mathcal{X}$, and $\mathbf{x}_{11}$, $\mathbf{x}_{21} \in \mathbb{C}^{M^2 \times1}$ are the complex information symbol vectors that are meant to be decoded at Rx-$1$. Let $\mathbf{G}_C = [ \mathbf{G}_{1}^T ~ \mathbf{G}_2^T ~\hdots\mathbf{G}_M^T]^T$, where $\mathbf{G}_i \in \mathbb{C}^{M\times M^2}$. Then, we have
\begin{align*}
 {y}_1 	:= vec\left(\frac{\mathbf{Y}'_1}{\sqrt{2}}\right)=\sqrt{\frac{3\rho}{8}}\underbrace{\left(\mathbf{I}_{M} \otimes \mathbf{H}_1\right)
 \left[\begin{array}{ll}
   \mathbf{G}_1&\mathbf{0}\\
   \mathbf{0} & \mathbf{G}_1\\
   \mathbf{G}_2&\mathbf{0}\\
   \mathbf{0} & \mathbf{G}_2\\
   \vdots & \vdots \\
   \mathbf{G}_M &\mathbf{0}\\
   \mathbf{0} & \mathbf{G}_M\\
  \end{array}\right]}_{\mathbf{H}_{eq_1}}\left[\begin{array}{l}
  \mathbf{x}_{11} \\ \mathbf{x}_{21}\\ \end{array}\right]
  + \mathbf{n}'_1,
\end{align*}
where $\mathbf{n}'_1 \sim {\cal CN}\left(\mathbf{0}, \mathbf{I}_{2M^2}\right)$. Since $\mathbf{G}_C$ is full-ranked (Definition \ref{def4}, Section \ref{sec2a}), the equivalent channel matrix $\mathbf{H}_{eq_1} \in \mathbb{C}^{2M^2 \times 2M^2}$ is full-ranked with probability 1 if and only if $\mathbf{H}_1$ is. To show that $\mathbf{H}_1$ is full-rank with probability 1 for our choice of the unitary matrix $\mathbf{P}$, we make use of the following result.

\begin{lemma}[\cite{caron}]\label{lemma1}
 A polynomial function on $\mathbb{R}^n$ to $\mathbb{R}$ is either identically 0, or non-zero almost everywhere.
\end{lemma}

Lemma \ref{lemma1} holds when $\mathbb{R}$ is replaced by $\mathbb{C}$ with its proof being on the same lines as that in \cite{caron}. Note that $det(\mathbf{H}_1)$ is a polynomial function on $\mathbb{C}^{4M^2}$ to $\mathbb{C}$, the variables being the $M^2$ entries of each\footnote{Though the channel-dependent power-normalizing factors are present in the denominator of the terms of $det(\mathbf{H}_1)$, they can be ignored. This is because these factors are almost surely non-zero and the columns of $\mathbf{H}_1$ can be de-normalized without affecting the rank of $\mathbf{H}_1$.} of $\mathbf{H}_{11}$, $\mathbf{H}_{12}$, $\mathbf{H}_{21}$ and $\mathbf{H}_{22}$. We now show that $det(\mathbf{H}_1)$ is not identically 0. Let $\mathbf{H}_{11}\mathbf{V}_{11} = \mathbf{I}_M$ and let $\mathbf{A} := \mathbf{H}_{21}\mathbf{V}_{21}$. So, 
\begin{equation*}
 det(\mathbf{H}_1) = det\left(\left[\begin{array}{ll}
                                     \mathbf{I}_M & \mathbf{A} \\
                                     \mathbf{P} & \mathbf{AP}\\
                                    \end{array}\right] \right)
\end{equation*}
which is 0 iff $\mathbf{AP}-\mathbf{PA}$ is singular. The following result now proves that $\mathbf{H}_1$ is full-ranked with probability 1.
\begin{lemma}\label{lemma2}
For a unitary matrix $\mathbf{P}$, there exists $\mathbf{A} \in \mathbb{C}^{M \times M}$ such that $\mathbf{AP}-\mathbf{PA}$ is full-ranked if and only if $\mathbf{P}$ has no eigenvalue with algebraic multiplicity greater than $\left\lfloor\frac{M}{2} \right\rfloor$.
\end{lemma}

The proof of Lemma \ref{lemma2} has been provided in Appendix \ref{appen_lem_full_rank}. Lemma \ref{lemma2} establishes that $\mathbf{H}_{eq_1}$ is full-ranked with probability 1, and hence Rx-$1$ gets $2M^2$ linearly independent complex symbols in $3M$ channel uses. An analysis on similar lines reveals that Rx-$2$ also obtains $2M^2$ linearly independent complex symbols in $3M$ channel uses. Therefore, the sum-rate of the proposed linear transmission scheme is $\frac{4M}{3}$ cspcu. Thus, the transmission scheme achieves the $\frac{4M}{3}$ sum-DoF of the $(2\times 2,M)$ X-Network upon using Gaussian distributed input constellations in which case the notion of diversity gain is no longer relevant.
\end{proof}

\section{Full-rank, Minimum Delay, rate-$\frac{M}{2}$ STBC with the CC-property for $M=3$} \label{sec4}
Our proposed linear transmission scheme that achieves the maximum sum-rate for the $(2\times 2,M)$ X-Network made use of STBCs from CDA. The STBCs with the CC-property that we have constructed so far have a rate of $M/2$ cspcu and a block-length of $2M$. As pointed out in Remark \ref{rem1}, it is not necessary for rate-$\frac{M}{2}$ STBCs with the CC-property to have a block length of at least $2M$. There are two significant advantages in employing rate-$\frac{M}{2}$ STBCs with the CC-property with a block length less than $2M$. 
\begin{enumerate}
 \item Firstly, a rate-$\frac{M}{2}$ STBC of block length $2T$ encodes $MT/2$ complex symbols. This means that at each receiver in the $(2\times 2,M)$ X-Network, a joint decoding of $MT$ complex symbols needs to be performed. So, it would be advantageous to have $T$ as small as possible. The lower bound on $T$ is $\left\lceil \frac{M}{2}\right \rceil$, which follows from the full-rankness condition required in Theorem \ref{thm_div}. The Alamouti STBC and the Srinath-Rajan STBC achieve this lower bound on $T$.
 \item At each receiver, the decoder needs to wait for $3T$ channel uses before it can proceed with the decoding. In view of tight delay-requirements, a low decoding-delay (which is the number of time-slots that the decoder has to wait for before proceeding to decode the symbols) is desirable. 
\end{enumerate}

Since the notion of STBCs with the CC-property is introduced in this paper, the problem of designing minimum-delay, full-rank, rate-$\frac{M}{2}$ STBCs with the CC-property for arbitrary $M$ is open. Such STBCs are known only for $M =2$ (the Alamouti STBC) and $M=4$ (the Srinath-Rajan STBC). In the following part of this section, we propose a full-rank, rate-${\frac{3}{2}}$ STBC with the CC-property having the minimum value of 4 for $T$. Using this STBC would incur a decoding delay of $6$ channel uses for the $(2\times 2,3)$ X-Network. We further show that the linear transmission scheme using this STBC achieves the maximum sum-rate of the $(2\times 2,3)$ X-Network.

The STBC $\bar{\mathcal{X}}$ with the CC-property for $M=3$ has its codewords $\bar{\mathbf{X}}$ of the form  
\begin{align} \label{eqn-M=3_STBC}
   \bar{\mathbf{X}}= \begin{bmatrix}
     s_1 & e^{i\theta}s_4 & -s_2^* &   -e^{i\theta}s_6^* \\
	s_2 & e^{i\theta}s_5 & s_1^* &  e^{i\theta}s_4^* \\
	e^{i\theta}s_3 & s_6 & -e^{i\theta}s_3^*& -s_5^* \\
     \end{bmatrix}
\end{align}

\noindent where $s_1 = x_{1I}+ix_{3Q}$, $s_2 = x_{2I}+ix_{4Q}$, $s_3 = x_{6I}+ix_{5Q}$, $s_4 = x_{5I}+ix_{6Q}$, $s_5 = x_{4I}+ix_{2Q} $, $s_6 = x_{3I}+ix_{1Q}$, and $\theta \in [0,2\pi)$. Note that the actual complex symbols $x_i$, $i=1,2,\cdots,6$, take values independently from a complex constellation $\mathcal{Q}$. In order to identify conditions on $\theta$ and $\mathcal{Q}$ that that need to be satisfied for $\bar{\mathcal{X}}$ to have full-rank, we make use of the following definition.

\begin{definition}[Coordinate Product Distance of a complex constellation \cite{ZaR}]
The Coordinate Product Distance (CPD) of a complex constellation $\mathcal{Q}$ is defined as
\begin{align*}
 CPD(\mathcal{Q})= \min_{u,v\in\mathcal{Q}, u \neq v}\left|u_I-v_I\right|\left|u_Q-v_Q\right|.
\end{align*}
\end{definition}

If a constellation has a CPD of zero, it can be rotated appropriately so that the resulting constellation has a non-zero CPD \cite{ZaR}. It must be observed that 
the product $\left|u_I-v_I\right|\left|u_Q-v_Q\right|$ is equal to zero for a constellation with non-zero CPD if and only if $u=v$.
\begin{lemma} \label{lem_diff_mat_FR}
There exists $\theta \in [0,2\pi)$ such that when $x_i$, $i=1,\cdots,6$, take values from a complex constellation $\mathcal{Q}$ with a non-zero CPD, $\bar{\mathcal{X}}$ is full-ranked.
\end{lemma}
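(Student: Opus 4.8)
The plan is to use linearity to reduce to difference matrices, then show that a suitably chosen $3\times 3$ minor of the difference matrix cannot vanish identically in $\theta$, and finally select a single $\theta$ that is good for every difference simultaneously. Since $\bar{\mathcal{X}}$ is a linear STBC, the difference $\bar{\mathbf{X}} := \bar{\mathbf{X}}_1 - \bar{\mathbf{X}}_2$ of two codewords again has the form \eqref{eqn-M=3_STBC}, with each $x_j$ replaced by $\Delta x_j := x_j^{(1)} - x_j^{(2)}$ and hence each $s_\ell$ replaced by the corresponding $\Delta s_\ell$. By Definition \ref{def_rank}, full-rankness is equivalent to $Rank(\bar{\mathbf{X}}) = 3$ whenever $(\Delta x_1, \dots, \Delta x_6) \neq \mathbf{0}$, and since $\bar{\mathbf{X}}$ is $3\times 4$ this is equivalent to the non-vanishing of at least one of its four $3\times 3$ minors, equivalently to $\det(\bar{\mathbf{X}}\bar{\mathbf{X}}^H) \neq 0$.

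The structural engine is the interplay between the non-zero CPD and the coordinate interleaving built into the $s_\ell$. A non-zero CPD means precisely that for distinct $u, v \in \mathcal{Q}$ both $|u_I - v_I|$ and $|u_Q - v_Q|$ are strictly positive, so a non-zero $\Delta x_j$ forces both $\Delta x_{jI} \neq 0$ and $\Delta x_{jQ} \neq 0$. Because the definitions place the real part of each $x_j$ in one $s_\ell$ and its imaginary part in a different one, for instance $\Delta s_1 = \Delta x_{1I} + i\,\Delta x_{3Q}$ and $\Delta s_6 = \Delta x_{3I} + i\,\Delta x_{1Q}$, so that the pair $(s_1, s_6)$ carries $(x_1, x_3)$, the pair $(s_2, s_5)$ carries $(x_2, x_4)$, and the pair $(s_3, s_4)$ carries $(x_5, x_6)$, a single non-zero $\Delta x_j$ makes two prescribed $\Delta s_\ell$ non-zero, independently of all the other difference coordinates.

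I would then expand two well-chosen minors and read off their $\theta$-independent terms. For the submatrix on columns $\{1,2,3\}$ one finds a constant term proportional to $\Delta s_6\,(|\Delta s_1|^2 + |\Delta s_2|^2)$, and for the submatrix on columns $\{1,3,4\}$ a constant term proportional to $\Delta s_5^{*}\,(|\Delta s_1|^2 + |\Delta s_2|^2)$; the sum-of-squares factor is exactly what rules out cancellation. A short case analysis on which of the three pairs contributes a non-zero difference then finishes the argument: if $\Delta x_1$ or $\Delta x_3$ is non-zero then $\Delta s_6 \neq 0$ and $\Delta s_1 \neq 0$, so the first minor has a non-zero constant term; if $\Delta x_2$ or $\Delta x_4$ is non-zero then $\Delta s_5 \neq 0$ and $\Delta s_2 \neq 0$, so the second does; and if only the third pair is non-zero then $\Delta s_1 = \Delta s_2 = \Delta s_5 = \Delta s_6 = 0$ while $\Delta s_3, \Delta s_4 \neq 0$, whence $\bar{\mathbf{X}}$ degenerates to a matrix whose three rows have essentially disjoint supports and are independent for every $\theta$. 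In each case some $3\times 3$ minor is a non-zero trigonometric polynomial in $\theta$.

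To conclude, since $\mathcal{Q}$ is finite there are only finitely many non-zero difference vectors, and for each the offending minor, being a non-zero trigonometric polynomial, vanishes for at most finitely many $\theta \in [0,2\pi)$; choosing $\theta$ outside this finite union makes every non-zero difference matrix full-rank, proving the lemma. (A choice uniform over all constellations drawn from a fixed number field such as QAM can instead be obtained by taking $e^{i\theta}$ transcendental.) The main obstacle is the third paragraph: carrying out the minor expansions and verifying that the interleaving is ``complete,'' i.e. that the two clean minors together with the degenerate third-pair case cover every non-zero difference. This is the $M=3$ analogue of the full-diversity verification for the Srinath-Rajan code.
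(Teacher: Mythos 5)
Your proposal is correct and follows essentially the same route as the paper's proof: reduce to nonzero difference matrices, use the non-zero CPD together with the coordinate interleaving to force pairs $(\triangle s_1,\triangle s_6)$, $(\triangle s_2,\triangle s_5)$, $(\triangle s_3,\triangle s_4)$ to be nonzero, exhibit a $3\times 3$ minor that is a not-identically-zero polynomial in $e^{i\theta}$ (via its $\theta$-independent term with the sum-of-squares factor), and conclude by finiteness of the constellation. The paper organizes the cases by whether $(\triangle x_{1I},\triangle x_{3I})$ and $(\triangle x_{5I},\triangle x_{6I})$ vanish and uses the column sets $\{1,2,3\}$ and $\{2,3,4\}$ rather than your $\{1,2,3\}$ and $\{1,3,4\}$, but these are cosmetic differences; your constant-term computations and the disjoint-support argument in the third case check out.
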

\begin{proof}
 The proof has been provided in Appendix \ref{appen_lem_diff_mat_FR}.
\end{proof}

Note that the STBC $\bar{\mathcal{X}}$ has the CC-property for the choice of GS-functions $f_1(\mathbf{x}) = \mathbf{P}_1\mathbf{x}^*$, $f_2(\mathbf{x}) = \mathbf{P}_2\mathbf{x}^*$, $g_1(\mathbf{x}) = \mathbf{P}_3\mathbf{x}^*$, $g_2(\mathbf{x}) = \mathbf{P}_4\mathbf{x}^*$, where
\begin{align*}
 &\mathbf{P}_1 = \left[ \begin{array}{ccc}
                        0 & -1 &  0 \\
                        1 & 0 & 0 \\
                        0 & 0 & e^{2i\theta}\\
                        \end{array}\right], ~~~ \mathbf{P}_2 = \left[ \begin{array}{ccc}
                        0 & -e^{2i\theta} &  0\\
                        0 & 0 & e^{i\theta} \\                       
                        e^{i\theta} & 0 & 0 \\
                       \end{array}\right],\\
&\mathbf{P}_3 = \left[ \begin{array}{ccc}
                        0 & 1 &  0\\
                        -1 & 0 &  0\\
                        0 & 0 &  e^{2i\theta}\\                      
                       \end{array}\right], ~~~ \mathbf{P}_4 = \left[ \begin{array}{ccc}
                        0 & 0 & e^{i\theta}\\
                        -e^{2i\theta} & 0 & 0\\
                        0 & e^{i\theta} & 0\\                       
                       \end{array}\right].
\end{align*}
We now prove that using a linear transmission scheme based on $\bar{\mathcal{X}}$ achieves the maximum sum-rate of 4 cspcu for the $(2\times 2,3)$ X-Network.

\begin{theorem} \label{thm_x3_DoF}
The proposed linear transmission scheme based on $\bar{\mathcal{X}}$ achieves the maximum sum-rate of 4 cspcu for the $(2\times 2,3)$ X-Network for any $\theta \in [0,2\pi)$.
\end{theorem}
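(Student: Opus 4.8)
The plan is to mirror the proof of Theorem \ref{thm-max_sum_rate_Mx}: cancel the interference using the CC-property, exhibit an equivalent (per-receiver) channel, and show it is invertible with probability $1$ via Lemma \ref{lemma1}, so that every desired symbol is recovered by zero-forcing. Here $M=3$, $2T=4$ (so $T=2$), each $\bar{\mathbf{X}}_{ij}$ is a $3\times4$ codeword carrying $6$ complex symbols, and $\mathbf{Y}_1\in\mathbb{C}^{3\times6}$. Using the GS-functions $f_1,f_2$ (matrices $\mathbf{P}_1,\mathbf{P}_2$) exactly as in the general scheme, the interference in columns $3,4$ of $\mathbf{Y}_1$ is removed by columns $5,6$ (columns $1,2$ are interference-free by construction), leaving
\begin{equation*}
\mathbf{Y}'_1 = \sqrt{\tfrac{3\rho}{4}}\left(\bar{\mathbf{H}}_{11}\bar{\mathbf{X}}_{11}+\bar{\mathbf{H}}_{21}\bar{\mathbf{X}}_{21}\right)+\mathbf{N}'_1 \in \mathbb{C}^{3\times4},
\end{equation*}
with $\bar{\mathbf{H}}_{i1}=\mathbf{H}_{i1}\mathbf{V}_{i1}$. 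Recovering the $12$ complex symbols in $(\bar{\mathbf{X}}_{11},\bar{\mathbf{X}}_{21})$ from the $12$ complex entries of $\mathbf{Y}'_1$ (and the analogous statement at Rx-$2$) yields a sum-rate of $2\times12/(3T)=24/6=4$ cspcu, which is the maximum.

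The key structural observation is that, by the CC-relations $g_1(\bar{\mathbf{X}}(1))=-\bar{\mathbf{X}}(3)$ and $g_2(\bar{\mathbf{X}}(2))=-\bar{\mathbf{X}}(4)$ with $g_k(\mathbf{x})=\mathbf{P}_{k+2}\mathbf{x}^{*}$, every codeword of $\bar{\mathcal{X}}$ can be written as $\bar{\mathbf{X}}=[\,\mathbf{c}_1\ \ \mathbf{c}_2\ \ -\mathbf{P}_3\mathbf{c}_1^{*}\ \ -\mathbf{P}_4\mathbf{c}_2^{*}\,]$, where $\mathbf{c}_1=[s_1,s_2,e^{i\theta}s_3]^T$ and $\mathbf{c}_2=[e^{i\theta}s_4,e^{i\theta}s_5,s_6]^T$, and where $\mathbf{x}\mapsto(\mathbf{c}_1,\mathbf{c}_2)$ is an $\mathbb{R}$-linear bijection (each $x_{kI},x_{kQ}$ appears exactly once as a coordinate of some entry of $\mathbf{c}_1$ or $\mathbf{c}_2$, up to the unitary factors $e^{i\theta}$). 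Consequently the processed system \emph{decouples}: columns $1,3$ of $\mathbf{Y}'_1$ involve only $\mathbf{c}_1^{(11)},\mathbf{c}_1^{(21)}$, while columns $2,4$ involve only $\mathbf{c}_2^{(11)},\mathbf{c}_2^{(21)}$.

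Writing $\mathbf{A}=\bar{\mathbf{H}}_{11}$, $\mathbf{B}=\bar{\mathbf{H}}_{21}$ and conjugating the column-$3$ (resp. column-$4$) equation linearizes each half into a $6\times6$ complex system with coefficient matrices
\begin{equation*}
\mathbf{M}_1=\begin{bmatrix}\mathbf{A} & \mathbf{B}\\ -\mathbf{A}^{*}\mathbf{P}_3^{*} & -\mathbf{B}^{*}\mathbf{P}_3^{*}\end{bmatrix},\qquad
\mathbf{M}_2=\begin{bmatrix}\mathbf{A} & \mathbf{B}\\ -\mathbf{A}^{*}\mathbf{P}_4^{*} & -\mathbf{B}^{*}\mathbf{P}_4^{*}\end{bmatrix},
\end{equation*}
so the whole claim reduces to showing $\det\mathbf{M}_1\neq0$ and $\det\mathbf{M}_2\neq0$ almost surely; invertibility of $\mathbf{M}_1,\mathbf{M}_2$ recovers $(\mathbf{c}_1,\mathbf{c}_2)$ for both users, hence all $12$ symbols by the bijection above. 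To finish I would treat each determinant as a not-identically-zero function of the channel and invoke Lemma \ref{lemma1}. Since the precoders are channel-dependent and the maps contain conjugates, I would first de-normalize (dropping the a.s.-positive scalars $1/\Vert\mathbf{H}_{i2}^{-1}\Vert$, which scale whole column blocks of $\mathbf{M}_k$) and clear the determinant denominators $\det\mathbf{H}_{12},\det\mathbf{H}_{22}$, so that $|\det\mathbf{M}_k|^2$ becomes a \emph{real} polynomial in the real and imaginary parts of the channel entries. It then suffices to exhibit one realization where it is nonzero: taking $\mathbf{A}=\mathbf{I}_3$, $\mathbf{B}=i\mathbf{I}_3$ (achievable up to positive scaling by $\mathbf{H}_{11}=\mathbf{H}_{12}=\mathbf{I}_3$ and $\mathbf{H}_{21}=i\mathbf{H}_{22}$) gives $\det\mathbf{M}_1\propto\det(2i\mathbf{P}_3^{*})\neq0$ and $\det\mathbf{M}_2\propto\det(2i\mathbf{P}_4^{*})\neq0$, because $\mathbf{P}_3,\mathbf{P}_4$ are unitary (being GS-matrices). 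Hence both determinants are nonzero almost everywhere, each receiver recovers its $12$ symbols, and the sum-rate is $4$ cspcu for every $\theta$.

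The main obstacle, and the part needing the most care, is the passage to Lemma \ref{lemma1}: the effective maps are conjugate-linear (hence non-holomorphic) and are built from normalized, channel-dependent precoders, so one must construct the correct real-polynomial surrogate before testing a single realization. The decoupling into the two independent $6\times6$ systems and the convenient choice $\mathbf{B}=i\mathbf{I}_3$ are exactly what collapse the single-realization verification to the unitarity of $\mathbf{P}_3,\mathbf{P}_4$; recognizing that structure is the one non-routine step, after which the argument is mechanical and, notably, independent of $\theta$.
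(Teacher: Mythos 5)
Your proposal is correct and follows essentially the same route as the paper's Appendix C: rewrite each codeword as $[\mathbf{c}_1~\mathbf{c}_2~\mathbf{P}'_1\mathbf{c}_1^*~\mathbf{P}'_2\mathbf{c}_2^*]$, conjugate the last two processed received columns to linearize the system in $(\mathbf{c}_1,\mathbf{c}_2)$, reduce the sum-rate claim to the almost-sure invertibility of a $6\times 6$ effective channel, and establish that by the polynomial-identity lemma (Lemma \ref{lemma1}) plus a single evaluation point. One difference is worth recording: because both desired codewords at Rx-$1$ come from the \emph{same} STBC, the conjugated column-$3$ equation has second block-row $\left[(\mathbf{H}_1\mathbf{P}'_1)^* ~~ (\mathbf{H}_2\mathbf{P}'_1)^*\right]$ and the column-$4$ equation has $\left[(\mathbf{H}_1\mathbf{P}'_2)^* ~~ (\mathbf{H}_2\mathbf{P}'_2)^*\right]$, so the system decouples into \emph{two distinct} $6\times 6$ matrices ($\mathbf{M}_1$ and $\mathbf{M}_2$ in your notation), both of which must be shown nonsingular; the paper instead tests the single mixed matrix $\mathbf{H}'$ whose lower blocks pair $\mathbf{P}'_1$ with $\mathbf{H}_1$ and $\mathbf{P}'_2$ with $\mathbf{H}_2$, which is not the matrix the signal model actually produces. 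Your bookkeeping is the consistent one, and your test point $\mathbf{A}=\mathbf{I}_3$, $\mathbf{B}=i\mathbf{I}_3$ handles both determinants at once via $\det(2i\mathbf{P}_k^*)\neq 0$, so your version closes the argument cleanly (your care with de-normalization and with forming a real polynomial surrogate before invoking Lemma \ref{lemma1} is also appropriate, and matches what the paper does implicitly).
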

\begin{proof}
The proof has been provided in Appendix \ref{appen_thm_x3_DoF}.
\end{proof}

\subsection{Simulation Results}\label{subsec4a}
We consider the $(2\times 2, 3)$ X-Network and plot the bit error rates (BER) for two linear transmission schemes; the low-delay transmission scheme based on the rate-$\frac{3}{2}$ STBC whose codewords are of the form given in \eqref{eqn-M=3_STBC}, and the transmission scheme based on the perfect STBC for 3 antennas \cite{ORBV} whose codewords $\bar{\mathbf{X}}$ are of the form shown below. 
\begin{align*}
\bar{\mathbf{X}} = \left[\begin{array}{ccc}
                    s_1 & \omega s_8 & \omega s_6\\
                    s_4 & s_2 & \omega s_9 \\
                    s_7 & s_5 & s_3 \\
                   \end{array} \right]
\end{align*}
where $\omega = e^{\frac{2i\pi}{3}}$ and for $j = 0,1,2$,
\begin{equation*}
 \left[\begin{array}{c}
        s_{3j+1} \\ s_{3j+2} \\ s_{3j+3} \\
       \end{array}
 \right]  = \left[ \begin{array}{ccc}
                    0.6603 + 0.3273i &  0.0207 + 0.3273i & -0.4920 + 0.3273i\\
    -0.2938 - 0.1456i & -0.0374 - 0.5898i & -0.6136 + 0.4081i\\
     0.5295 + 0.2625i & -0.0467 - 0.7355i &  0.2730 - 0.1816i\\
                   \end{array} \right]\left[\begin{array}{c}
        x_{3j+1} \\ x_{3j+2} \\ x_{3j+3} \\
       \end{array}
 \right],
\end{equation*}
with the symbols $x_i$, $i=1,\cdots,9$, taking values from a $4$-HEX constellation\footnote{A square $M$-HEX constellation of size $M$ is given by $\{a+\omega b ~\vert ~a,b \in \sqrt{M}\textrm{-PAM} \}$.}. The chosen unitary matrix $\mathbf{P}$, with reference to \eqref{eq_stbc_proposed}, is 
\begin{align*}
 \mathbf{P}=\begin{bmatrix}
             0 & -1 & 0\\
             1 & 0  & 0\\
             0 & 0 & 1
            \end{bmatrix}.
\end{align*}
The eigenvalues of $\mathbf{P}$ are $i$, $-i$ and $1$ which are distinct. Thus, from Theorem \ref{thm-max_sum_rate_Mx}, the transmission scheme for the above choice of $\mathbf{P}$ achieves the maximum sum-rate of 4 cspcu. For the low-delay transmission scheme, we employ the QPSK constellation rotated by an angle $\phi=\frac{\text{tan}^{-1}(2)}{2}$ which has a non-zero CPD \cite{ZaR}. From Theorem \ref{thm_x3_DoF}, the transmission scheme achieves the maximum sum-rate of 4 cspcu. We set $\theta=\frac{\pi}{4}$, and for this choice of $\theta$, a brute force computation for all pairs of difference matrices using the software MATLAB reveals that the proposed low-delay STBC is indeed full-ranked. The sum-rate achieved by both transmission schemes for the choice of their respective complex constellations is $8$ bits per channel use. The sphere decoder \cite{ViB} has been used to decode the transmitted symbols. The BER performances of both the transmission schemes are plotted in Fig. \ref{fig-BER_QPSK}. It can be inferred that the proposed schemes achieve a diversity gain of at least four which agrees with our analysis. 
\begin{figure}[htbp]
\centering
\includegraphics[totalheight=3.5in,width=5.8in]{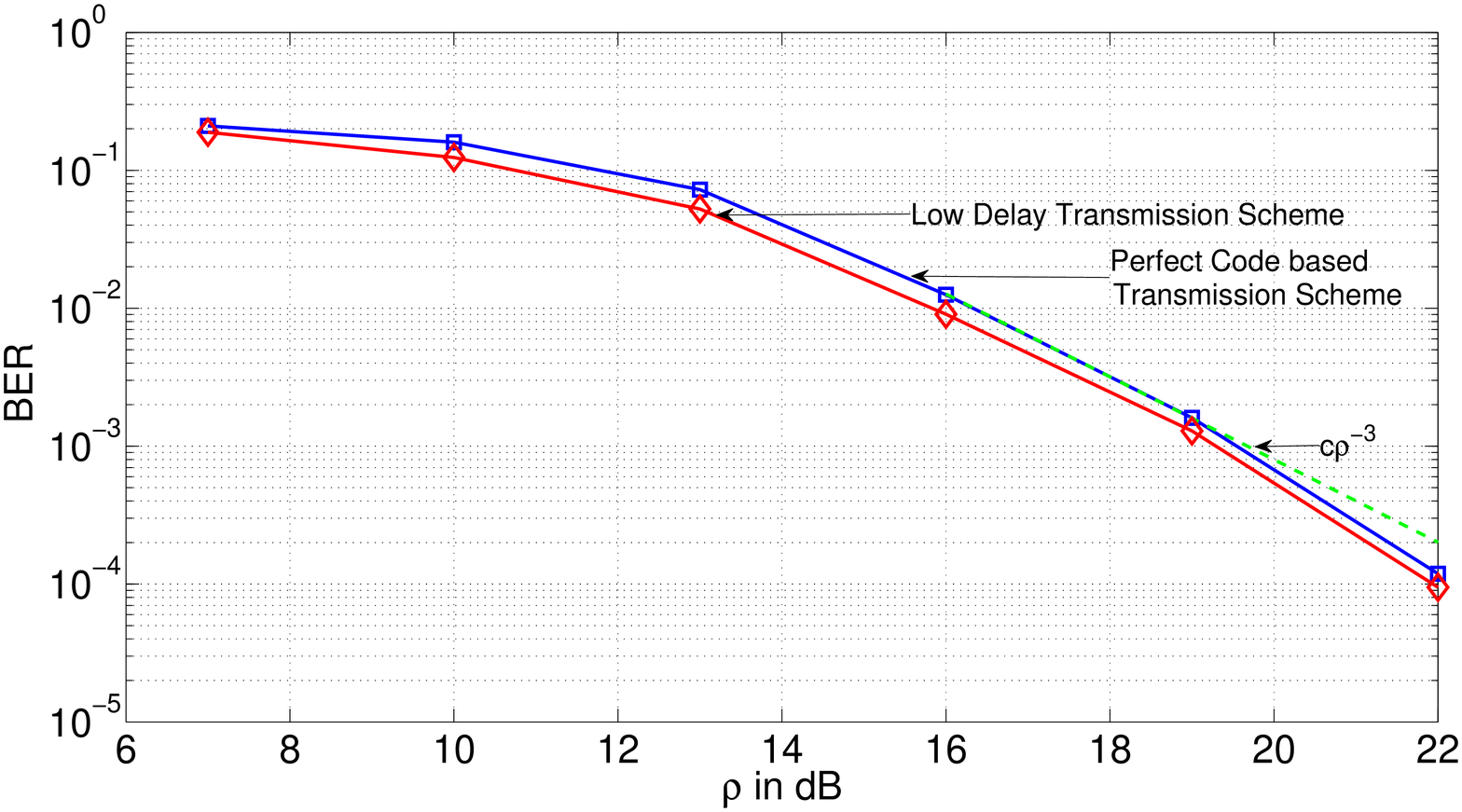}
\caption{BER comparison between the low delay transmission scheme and the Perfect STBC based transmission scheme at a sum-rate of $8$ bits per channel use using QPSK/HEX input constellations for the $(2\times 2, 3)$ X-Network. The low delay transmission scheme performs similar to the Perfect STBC based transmission scheme and also has lower decoding complexity because of lower delay. The dotted green line is plotted for some constant $c>0$.}
\label{fig-BER_QPSK}
\end{figure} 

\section{Concluding Remarks}\label{sec5}
In this paper, a maximum sum-rate transmission scheme for the $(2\times 2, M)$ X-Network was presented for arbitrary $M$. A new class of STBCs, namely STBCs with the column cancellation property, was introduced and used in the proposed transmission scheme. The proposed transmission scheme was shown to achieve the $\frac{4M}{3}$ sum-DoF of  the X-Network with only the availability of  local CSIT, whereas the Jafar-Shamai scheme \cite{JaS_X_Ch_2008} requires the availability of global CSIT in order to achieve the same. In addition, for block-fading channels, it was proven analytically that a diversity gain of $M+1$ is guaranteed when fixed finite input constellations are employed. Further, the known transmission schemes for the $(2\times 2, M)$ X-Network with $M=2,4$ \cite{LiJ,AbR_4x_X-Ch_TIT2014} were shown to be special cases of the transmission scheme proposed in this paper. 

With regards to diversity gain with fixed finite input constellations, it was shown that full receive diversity is achieved, but that the transmit diversity is affected due to channel-dependent precoding. While the achievability of a non-trivial diversity gain of $M+1$ was established for the $(2\times 2, M)$ X-Network, the intriguing possibility of achieving full transmit and full receive diversity at the maximum sum-rate transmission needs to be further investigated. This work also motivates the design of minimum-delay STBCs with the column cancellation property as a possible research direction.

\appendices
\section{Proof of Lemma \ref{lemma2}}\label{appen_lem_full_rank}

We first prove that for any matrix $\mathbf{P} = \textrm{diag}[\lambda_1,\lambda_2,\cdots,\lambda_M]$ with $\vert \lambda_i \vert^2 =1$, $i=1,\cdots,M$, there exists $\mathbf{A} \in \mathbb{C}^{M \times M}$ such that $\mathbf{AP}-\mathbf{PA}$ is full-ranked if and only if no more than $\left\lfloor\frac{M}{2} \right\rfloor$ of the $\lambda_i$ are equal. Denoting the $(i,j)^{th}$ entry of $\mathbf{A}$ by $a_{ij}$, $i,j = 1,2,\cdots,M$, we have
\begin{equation*}
 \mathbf{C}:=\mathbf{AP}-\mathbf{PA} = \left[ \begin{array}{ccccc}
                                 0 & (\lambda_2-\lambda_1)a_{12} & (\lambda_3-\lambda_1)a_{13} & \cdots & (\lambda_M-\lambda_1)a_{1M} \\
                                 (\lambda_1-\lambda_2)a_{21} & 0 & (\lambda_3-\lambda_2)a_{23}& \cdots & (\lambda_M-\lambda_2)a_{2M} \\
                                 (\lambda_1-\lambda_3)a_{31} & (\lambda_2-\lambda_3)a_{32} & 0 &\cdots & (\lambda_M-\lambda_3)a_{3M} \\
                                 \vdots & \vdots & \vdots & \ddots & \vdots \\
                                 (\lambda_1-\lambda_M)a_{M1}& (\lambda_2-\lambda_M)a_{M2} & (\lambda_3-\lambda_M)a_{M3}& \cdots & 0\\                                 
                                  \end{array}\right].
\end{equation*}

\noindent In short, the $(i,j)^{th}$ entry of $\mathbf{C} =\mathbf{AP}-\mathbf{PA}$ is $(\lambda_j-\lambda_i)a_{ij}$. Let $k_1,\cdots,k_l$ be the algebraic multiplicities of the eigenvalues of $\mathbf{P}$ with $k_1 \geq k_2\geq \cdots \geq k_l$ and $\sum_{i=1}^{l}k_i = M$. Without loss of generality, let $\lambda_{1} = \lambda_{2} = \cdots = \lambda_{k_1} \neq \lambda_{k_1+i}$, $i =1, \cdots,M-k_1$. Therefore, we have $\mathbf{C}_{11}:=\mathbf{C}(1:k_1,1:k_1) = \mathbf{0}$, and every entry (excepting the diagonal elements of $\mathbf{C}$) of $\mathbf{C}_{21}:=\mathbf{C}(k_1+1:M,1:k_1)$ is dependent on the choice of $a_{ij}$, $i=k_1+1,\cdots,M$, $j =1,\cdots,k_1$. Likewise, every entry of $\mathbf{C}_{12}:=\mathbf{C}(1:k_1,k_1+1:M)$ is dependent on the choice of $a_{ij}$, $i=1,\cdots,k_1$, $j =k_1+1,\cdots,M$. Since $\mathbf{C}_{11} = \mathbf{0}$, it is clear that for $\mathbf{C} =\mathbf{AP}-\mathbf{PA}$ to be full-ranked, $\mathbf{C}_{12} \in \mathbb{C}^{k_1\times (M-k_1)}$ and $\mathbf{C}_{21} \in \mathbb{C}^{(M-k_1)\times k_1}$ must be of rank $k_1$. But $Rank(\mathbf{C}_{12}),Rank(\mathbf{C}_{21}) \leq \min(k_1,M-k_1)$ and so, it must be that $k_1 \leq M-k_1$ so that $ k_1 \leq \left\lfloor\frac{M}{2} \right\rfloor$. 

To prove the converse, i.e., if $k_1 \leq \left\lfloor\frac{M}{2} \right\rfloor$, then there exists some assignment of values to $a_{ij}$ from $\mathbb{C}$ such that $\mathbf{C} =\mathbf{AP}-\mathbf{PA}$ is full-ranked, we make use of the following simple observation.
\begin{lemma}\label{lemma3}
 For a matrix $\mathbf{B} \in \mathbb{C}^{p \times p}$ that contains a null submatrix of size $m \times n$ and the remaining entries $b_{ij}$ are allowed to be chosen independently, there exists an assignment of values to $b_{ij}$ from $\mathbb{C}$ such that $Rank(\mathbf{B}) = p$ if $m+n \leq p$.  
\end{lemma}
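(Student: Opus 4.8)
The plan is to prove existence by exhibiting a single explicit assignment of the free entries that makes $\mathbf{B}$ invertible; since a $p \times p$ matrix has $Rank(\mathbf{B})=p$ exactly when its determinant is nonzero, one such witness suffices. First I would normalise the position of the null block: permuting rows and columns does not change the rank, so I may relabel the $m$ rows and $n$ columns that meet in the null submatrix and assume without loss of generality that the zero block occupies rows $1,\dots,m$ and columns $p-n+1,\dots,p$, i.e. the top-right corner of $\mathbf{B}$.

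Next I would test the most economical candidate, namely setting each diagonal entry $b_{ii}=1$ and every other free entry to $0$, aiming to realise $\mathbf{B}=\mathbf{I}_p$. The only point to check is that no diagonal position is trapped inside the null block. A position $(i,i)$ lies in the block precisely when $i\le m$ and $i\ge p-n+1$ hold at once, that is, when $p-n+1\le i\le m$; this range of indices is empty exactly when $p-n+1>m$, which is the hypothesis $m+n\le p$. Hence, under $m+n\le p$, every diagonal position is a free entry, the proposed assignment yields the identity matrix, and $Rank(\mathbf{B})=p$ as required. Translating back to the original (unpermuted) matrix, the chosen unit entries sit at a set of positions --- one per row and one per column --- that is disjoint from the null block, so $\mathbf{B}$ is a permutation matrix with $\det\mathbf{B}=\pm 1$.

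I do not expect any genuine obstacle here: the content of the lemma is simply that $m+n\le p$ is exactly the condition guaranteeing a full transversal (a set of admissible positions meeting every row and every column once) that avoids the forbidden rectangle, and the only care needed is the index arithmetic in the step above. If desired, this same computation can be phrased through Hall's marriage theorem applied to the bipartite graph of admissible positions, where the tightest Hall deficiency arises for the subset of rows equal to the block's rows and reproduces the inequality $m+n\le p$; the corner reduction is just the concrete shortcut. Finally, I would note that once a single full-rank witness is in hand, Lemma \ref{lemma1} upgrades the conclusion for free: $\det\mathbf{B}$ is a polynomial in the free entries that is not identically zero, so in fact almost every assignment of the free entries gives $Rank(\mathbf{B})=p$.
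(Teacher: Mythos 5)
Your proof is correct, and it reaches the conclusion by a cleaner witness than the paper's. The paper normalises the null block to the top-left corner (rows $1,\dots,m$, columns $1,\dots,n$, with $n\le m$), then chooses the two square sub-blocks $\mathbf{B}_1=\mathbf{B}(m+1\,{:}\,p,\,1\,{:}\,p-m)$ and $\mathbf{B}_2=\mathbf{B}(1\,{:}\,m,\,p-m+1\,{:}\,p)$ to be full rank and zeros everywhere else, producing a block anti-diagonal matrix; the hypothesis $m+n\le p$ enters as the condition $n\le p-m$ guaranteeing that the columns of $\mathbf{B}_2$ miss the forbidden rectangle. You instead normalise the block to the top-right corner and observe that $m+n\le p$ is precisely the condition under which the main diagonal avoids the block, so the single assignment $\mathbf{B}=\mathbf{I}_p$ (a permutation matrix in the original labelling) already witnesses full rank. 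The two arguments are logically equivalent --- your identity matrix is essentially the paper's construction with $\mathbf{B}_1$ and $\mathbf{B}_2$ specialised to identities and the corner chosen differently --- but yours makes the role of the inequality $m+n\le p$ completely transparent and avoids the (harmless but unstated) step in the paper of also zeroing the free entries in $\mathbf{B}(1\,{:}\,m,\,n+1\,{:}\,p-m)$. Your closing remark that Lemma \ref{lemma1} then upgrades one witness to ``almost every assignment'' is accurate and consistent with how the paper uses that lemma elsewhere, though it is not needed for the statement as given.
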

\begin{proof}
Without loss of generality, let $\mathbf{B}(1:m,1:n) = \mathbf{0}_{m\times n}$ and let $n \leq m$. Now, $\mathbf{B}_1: = \mathbf{B}(m+1:p,1:p-m) \in \mathbb{C}^{(p-m)\times(p-m)}$ has entries all of which can be independently chosen from $\mathbb{C}$, and since $n \leq p-m$, the same holds true for $\mathbf{B}_2:=\mathbf{B}(1:m,p-m+1:p) \in \mathbb{C}^{m \times m}$. Choosing $\mathbf{B}_1$ and $\mathbf{B}_2$ to be full-ranked and $\mathbf{B}(m+1:p,p-m+1:p) = \mathbf{0}_{(p-m)\times m}$ ensures the full-rankness of $\mathbf{B}$.
\end{proof}

Since $k_1 \leq \left\lfloor\frac{M}{2} \right\rfloor$ and $\sum_{i=1}^L k_i=M$, there exists some $j < l$ such that $\sum_{i=1}^{j-1}k_i < \left\lfloor\frac{M}{2} \right\rfloor$ but $\sum_{i=1}^{j}k_i \geq \left\lfloor\frac{M}{2} \right\rfloor$ (with $k_0$ defined to be $0$). Consider the sub-matrices 
\begin{eqnarray*}
 \mathbf{C}_1& :=& \mathbf{C}\left(\left\lceil\frac{M}{2} \right\rceil:M,1:\left\lceil\frac{M}{2} \right\rceil \right)\in \mathbb{C}^{\left\lceil\frac{M}{2} \right\rceil \times \left\lceil\frac{M}{2} \right\rceil}, \\ \mathbf{C}_2& :=& \mathbf{C}\left(1:\left\lfloor\frac{M}{2} \right\rfloor,\left\lceil\frac{M}{2} \right\rceil+1:M \right) \in \mathbb{C}^{\left\lfloor\frac{M}{2} \right\rfloor \times \left\lfloor\frac{M}{2} \right\rfloor}
\end{eqnarray*}
 of $\mathbf{C} = \mathbf{AP}-\mathbf{PA}$. We assume without loss of generality that 
 \begin{align*}
&  \lambda_{1} = \lambda_{2} = \cdots = \lambda_{k_1} \neq \lambda_{k_1+i}, i =1,\cdots,M-k_1, \\
 & \lambda_{k_1+1} = \lambda_{k_1+2} = \cdots = \lambda_{k_1+k_2} \neq \lambda_{k_1+k_2+i}, i=1,\cdots,M-k_1-k_2, \\
  &\hspace{3cm}\vdots\\
  &\lambda_{k_1+k_2+\cdots+k_{l-1}+1} = \cdots = \lambda_{M}.
 \end{align*}
 
\begin{figure}[htbp] \vspace{-1.5cm}
\centering
\includegraphics[totalheight=5.2in,width=3.5in]{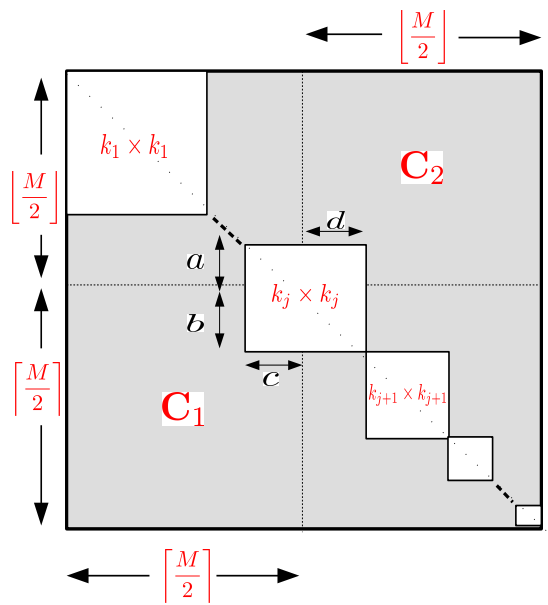}
\vspace{-2.5cm}
\caption{The structure of $\mathbf{AP}-\mathbf{PA}$ for a diagonal unitary matrix $\mathbf{P}$. The unshaded blocks denote the null matrices of size $k_i \times k_i$, $i=1,\cdots,l$, while the shaded region denotes the portion of $\mathbf{AP}-\mathbf{PA}$ which has entries that can be independently chosen from $\mathbb{C}$.}
\label{fig-matrix}
\end{figure} 

Therefore, $\mathbf{C}_1$ contains a null submatrix of size $b \times c$, with $b,c \leq k_j$, and the remaining entries of $\mathbf{C}_1$ are free to be chosen from $\mathbb{C}$ (see Fig. \ref{fig-matrix}). However, with $a = k_j-b$, we have that $c = a$ (if $M$ is even) and $c = a+1$ (if $M$ is odd). Therefore $b+c \leq k_j+1 \leq \left\lceil\frac{M}{2} \right\rceil$ because $k_{j-1} < \left\lfloor\frac{M}{2} \right\rfloor$ by assumption. Therefore, from Lemma \ref{lemma3}, $\mathbf{C}_1$ can be made full-ranked by a suitable choice of its non-zero entries. Following a similar argument, $\mathbf{C}_2$ too can be made non-singular by a suitable choice of its entries. Forcing $\mathbf{C}_1$ and $\mathbf{C}_2$ to be non-singular and the remaining entries of $\mathbf{C}$ to be zeros forces $\mathbf{C}$ to be non-singular as well. Hence, there does exist some assignment of values to $a_{ij}$ from $\mathbb{C}$ such that $\mathbf{C} =\mathbf{AP}-\mathbf{PA}$ is full-ranked. 

Now, for an arbitrary unitary matrix $\mathbf{P}$ that is not diagonal but has no eigenvalue with algebraic multiplicity exceeding $\left\lfloor\frac{M}{2} \right\rfloor$, we have $\mathbf{P} = \mathbf{UDU}^H$, obtained upon eigen-decomposition with $\mathbf{U}$ and $\mathbf{D}$ unitary, $\mathbf{D}$ diagonal. So, 
\begin{eqnarray*}
 \mathbf{AP} - \mathbf{PA} & = & \mathbf{AUDU}^H - \mathbf{UDU}^H\mathbf{A} \\
  & = & \mathbf{U}(\mathbf{U}^H\mathbf{AUD} - \mathbf{DU}^H\mathbf{AU})\mathbf{U}^H \\
  & = & \mathbf{U}(\mathbf{BD} - \mathbf{DB})\mathbf{U}^H
\end{eqnarray*}
where $\mathbf{B} := \mathbf{U}^H\mathbf{AU}$. So, $\mathbf{AP} - \mathbf{PA}$ is full-ranked if and only if $\mathbf{BD} - \mathbf{DB}$ also is. Applying the argument made in the previous paragraph, there exists $\mathbf{B} \in \mathbb{C}^{M \times M}$ for which $\mathbf{AP} - \mathbf{PA}$ is full-ranked. This proves Lemma \ref{lemma2}.

\section{Proof of Lemma \ref{lem_diff_mat_FR}}
\label{appen_lem_diff_mat_FR}
We prove that for every non-zero difference matrix, there exist at most a finite number of values of $\theta$ for which it is not full-ranked. Thus we conclude that there always exists $\theta$ such that all the non-zero difference matrices are full-ranked. 

Without loss of generality, we consider a difference matrix $\triangle {\bar{\mathbf{X}}} \neq \mathbf{0}$ which can be expressed as
\begin{align*}
   \triangle {\bar{\mathbf{X}}}= \left[\begin{array}{cccc}
        \triangle s_1 & e^{i\theta}\triangle s_4 & -\triangle s_2^* & -e^{i\theta}\triangle s_6^* \\
	\triangle s_2 &  e^{i\theta}\triangle s_5 &  \triangle s_1^* &e^{i\theta}\triangle s_4^* \\
	e^{i\theta}\triangle s_3  & \triangle s_6 & -e^{i\theta}\triangle s_3^* & -\triangle s_5^*\\
	\end{array}\right]
\end{align*}
where $ \triangle s_1 =  \triangle x_{1I}+i \triangle x_{3Q}$, $  \triangle s_2 =  \triangle x_{2I}+i  \triangle x_{4Q}$, $ \triangle s_3 =  \triangle x_{6I}+i \triangle x_{5Q}$, $ \triangle s_4 =  \triangle x_{5I}+i \triangle x_{6Q}$, $ \triangle s_5 =  \triangle x_{4I}+i \triangle x_{2Q} $, $ \triangle s_6 =  \triangle x_{3I}+i \triangle x_{1Q}$, with $ \triangle x_i$, $i=1,2,\cdots,6$, being the difference symbols.
Consider the matrices $\mathbf{A},\mathbf{B} \in \mathbb C^{3\times 3}$ comprised of the first three columns and the last three columns of $\triangle {\bar{\mathbf{X}}}$ respectively. Expanding along the second column, the determinant of $\mathbf{A}$ is 
\begin{eqnarray}
\nonumber
-det(\mathbf{A})&=&e^{2i\theta}\triangle s_4\left(-\triangle s_2 \triangle s_3^* -\triangle s_1^*\triangle s_3\right)
-e^{2i\theta}\triangle s_5\left(-\triangle s_1 \triangle s_3^* +\triangle s_2^*\triangle s_3\right)\\
\label{eqn-det_A}
&&+\triangle s_6\left(\vert \triangle s_1 \vert^2 +\vert \triangle s_2 \vert^2\right).
\end{eqnarray}
Expanding along the second column of $\mathbf{B}$, its determinant is
\begin{eqnarray}
 \nonumber
-det(\mathbf{B})&=&e^{i\theta}\triangle s_2^*\left(-\vert \triangle s_5 \vert^2 -\triangle s_4^*\triangle s_6\right)
+e^{i\theta}\triangle s_1^*\left(-\triangle s_4 \triangle s_5^* +\vert \triangle s_6\vert^2\right)\\
\label{eqn-det_B}
&&+e^{2i\theta}\triangle s_3^*\left(\vert \triangle s_4\vert^2 +\triangle s_5\triangle s_6^*\right).
\end{eqnarray}

\textbf{Case 1:} Consider the case {$\left(\triangle x_{1I},\triangle x_{3I}\right)=(0,0)$} and {$\left(\triangle x_{5I},\triangle x_{6I}\right)=(0,0)$}. Here, the determinant of $\mathbf{B}$ is 
\begin{align*}
  det(\mathbf{B})=e^{i\theta}\triangle s_2^*\vert \triangle s_5 \vert^2. 
\end{align*} Since $\triangle {\bar{\mathbf{X}}} \neq \mathbf{0}$, either $\triangle x_{2I}$ or $\triangle x_{4Q}$ or both of them are non-zero. Hence, $  det(\mathbf{B})\neq 0$ and $\triangle {\bar{\mathbf{X}}}$ is of rank $3$.

\textbf{Case 2:} Consider the case {$\left(\triangle x_{1I},\triangle x_{3I}\right)\neq(0,0)$} and {$\left(\triangle x_{5I},\triangle x_{6I}\right)=(0,0)$}. The determinant of $\mathbf{A}$ is given by
{\begin{align*}
det(\mathbf{A})=-\triangle s_6\left(\vert \triangle s_1 \vert^2 +\vert \triangle s_2 \vert^2\right).
\end{align*}}Since $\triangle x_{3I}$ or $\triangle x_{1Q}$ or both are non-zero, $det(\mathbf{A}) \neq 0$ for this case. Hence,  $\triangle {\bar{\mathbf{X}}}$ is of rank $3$.

\textbf{Case 3:} Consider the case {$\left(\triangle x_{1I},\triangle x_{3I}\right)=(0,0)$} and {$\left(\triangle x_{5I},\triangle x_{6I}\right)\neq(0,0)$}. In this case, the coefficient of $e^{2i\theta}$ in the determinant of the matrix $\mathbf{B}$ is given by {$\left(-\triangle s_3^*\right)\left({\triangle x_{5I}}^2+{\triangle x_{6Q}}^2\right) \neq 0$}. Now, $  det(\mathbf{B})$ is a quadratic polynomial in $e^{i\theta}$ which can have at most two roots for $e^{i\theta}$, and hence at most a finite number of values of $\theta$ for which $  det(\mathbf{B})=0$. Therefore, there exist infinite values of $\theta$ for which $  det(\mathbf{B}) \neq 0$ in this case.

\textbf{Case 4:} Consider the case {$\left(\triangle x_{1I},\triangle x_{3I}\right)\neq(0,0)$} and {$\left(\triangle x_{5I},\triangle x_{6I}\right)\neq(0,0)$}. If the first two terms of $det(\mathbf{A})$ given in (\ref{eqn-det_A}) do not sum to zero then, $det(\mathbf{A})$ is clearly a quadratic polynomial in $e^{i\theta}$. Thus, there exist infinite values of $\theta$ for which $det(\mathbf{A})$ is non-zero. If the first two terms of $det(\mathbf{A})$ sum to zero then,  $det(\mathbf{A}) \neq 0$ for the same reason as in Case $2$. Hence, $\triangle {\bar{\mathbf{X}}}$ is of rank $3$ in this case also.

\section{Proof of Theorem \ref{thm_x3_DoF}}
\label{appen_thm_x3_DoF}
Referring to \eqref{eqn-rx1_IF}, the interference-free processed received symbol matrix $\mathbf{Y}'_1 \in \mathbf{C}^{3\times 4}$ is given by 
\begin{align*} 
&\mathbf{Y}'_1=\sqrt{\frac{3\rho}{4}}\left(\mathbf{H}_{11}\mathbf{V}_{11}\bar{\mathbf{X}}_{11}+\mathbf{H}_{21}\mathbf{V}_{21}\bar{\mathbf{X}}_{21}\right)+\mathbf{N}'_1,
\end{align*}
where $\mathbf{N}'$ is a noise matrix whose entries are independent. We have $\mathbf{N}'(i) \sim {\cal CN}(0,\mathbf{I}_3)$, $i=1,2$, and $\mathbf{N}'(i) \sim {\cal CN}(0,2\mathbf{I}_3)$, $i=3,4$. Since increasing the noise variance affects neither the achieved DoF nor the diversity gain, we assume that $\mathbf{N}'(i) \sim {\cal CN}(0,2\mathbf{I}_3)$, $i=1,2,3,4$. The matrices $\bar{\mathbf{X}}_{i1}$ have the structure given in \eqref{eqn-M=3_STBC}. Specifically, 
\begin{align*}
   \bar{\mathbf{X}}_{i1}= \left[\begin{array}{cccc}
     s_{1}^{(i1)} & e^{i\theta}s_{4}^{(i1)} & -\left(s_{2}^{(i1)}\right)^* &   -e^{i\theta}\left(s_{6}^{(i1)}\right)^* \\
	s_{2}^{(i1)} & e^{i\theta}s_{5}^{(i1)} & \left(s_{1}^{(i1)}\right)^* &  e^{i\theta}\left(s_{4}^{(i1)}\right)^* \\
	e^{i\theta}s_{3}^{(i1)} & s_{6}^{(i1)} & -e^{i\theta}\left(s_{3}^{(i1)}\right)^*& -\left(s_{5}^{(i1)}\right)^* \\
     \end{array}\right]
\end{align*}

\noindent where $s_{1}^{(i1)} = x_{1I}^{(i1)}+ix_{3Q}^{(i1)}$, $s_{2}^{(i1)} = x_{2I}^{(i1)}+ix_{4Q}^{(i1)}$, $s_{3}^{(i1)} = x_{6I}^{(i1)}+ix_{5Q}^{(i1)}$, $s_{4}^{(i1)} = x_{5I}^{(i1)}+ix_{6Q}^{(i1)}$, $s_{5}^{(i1)} = x_{4I}^{(i1)}+ix_{2Q}^{(i1)} $, $s_{6}^{(i1)} = x_{3I}^{(i1)}+ix_{1Q}^{(i1)}$, with $x_{j}^{(i1)}$, $i=1,2$, $j=1,\cdots,6$, taking values from a suitable complex constellation. Let $\mathbf{x}_{i1} := [x_{1}^{(i1)} ~ x_{2}^{(i1)}~ x_{3}^{(i1)}~x_{4}^{(i1)}~x_{5}^{(i1)}~x_{6}^{(i1)} ]^T $, and \[\mathbf{s}_{i1}: = \left[s_{1}^{(i1)} ~ s_{2}^{(i1)} ~ e^{i\theta}s_{3}^{(i1)} ~ e^{i\theta}s_4^{(i1)} ~ e^{i\theta}s_5^{(i1)} ~ s_6^{(i1)} \right]^T = \left[\bar{\mathbf{X}}_{i1}(1)^T ~ \bar{\mathbf{X}}_{i1}(2)^T\right]^T. \] It is evident that $\mathbf{x}_{i1}$ can be completely recovered from $\mathbf{s}_{i1}$. We therefore have
\[ \bar{\mathbf{X}}_{i1} =\left[\bar{\mathbf{X}}_{i1}(1) ~ \bar{\mathbf{X}}_{i1}(2) ~ \mathbf{P}'_1\left(\bar{\mathbf{X}}_{i1}(1)\right)^* ~ \mathbf{P}'_2\left(\bar{\mathbf{X}}_{i1}(2)\right)^* \right]\] with 
\begin{equation*}
 \mathbf{P}'_1 = \left[ \begin{array}{ccc}
                        0 & -1 &  0 \\
                        1 & 0 & 0 \\
                        0 & 0 & -e^{2i\theta}\\
                        \end{array}\right], ~~~ \mathbf{P}'_2 = \left[ \begin{array}{ccc}
                        0 & 0 & -e^{i\theta}\\
                        e^{2i\theta} & 0 & 0\\                       
                        0 & -e^{i\theta} & 0 \\
                       \end{array}\right],
\end{equation*}

Therefore, \begin{align*}
		 \left[\begin{array}{c}
                       \mathbf{Y}_1(1)\\
                       \mathbf{Y}_1(2)\\  
                       (\mathbf{Y}_1(3))^*\\
                       (\mathbf{Y}_1(4))^*\\
                      \end{array}\right] =  \sqrt{\frac{3\rho}{4}}\left[\begin{array}{cccc}
                     \mathbf{H}_1 & \mathbf{0} &  \mathbf{H}_2   & \mathbf{0}\\
                      \mathbf{0} &  \mathbf{H}_1 & \mathbf{0} & \mathbf{H}_2 \\
                      \left(\mathbf{H}_1\mathbf{P}'_1\right)^* & \mathbf{0} & \left(\mathbf{H}_2\mathbf{P}'_2\right)^* &  \mathbf{0}\\
                      \mathbf{0} & \left(\mathbf{H}_1\mathbf{P}'_1\right)^*  & \mathbf{0} &  \left(\mathbf{H}_2\mathbf{P}'_2\right)^* \\                     
                      \end{array}                      
                      \right]\left[\begin{array}{c}
                       \mathbf{s}_{11}\\
                       \mathbf{s}_{21}\\                       
                      \end{array}\right]+ \left[\begin{array}{c}
                       \mathbf{n}'\\
                      \end{array}\right],
\end{align*}
where $\mathbf{H}_{11}\mathbf{V}_{11} := \mathbf{H}_1$, $ \mathbf{H}_{21}\mathbf{V}_{21} := \mathbf{H}_2$, and $\mathbf{n}' \sim \mathcal{CN}(\mathbf{0},2\mathbf{I}_{12})$. To establish that Rx-$1$ receives 2 linearly independent complex symbols per channel use (i.e., 12 linearly independent complex symbols in 6 channel uses), it is sufficient to prove that the matrix 
\begin{equation*}
 \mathbf{H}' := \left[\begin{array}{cc}
                       \mathbf{H}_1 &  \mathbf{H}_2 \\
                      \left(\mathbf{H}_1\mathbf{P}'_1\right)^* & \left(\mathbf{H}_2\mathbf{P}'_2\right)^*\\
                      \end{array} \right]
\end{equation*}
is full-ranked almost surely. Let us assign $\mathbf{H}_1 = \mathbf{I}_3$. To prove that $det(\mathbf{H}')$ is not identically $0$, it is sufficient to prove that $det(\mathbf{H}')$ is a non-zero polynomial in the rest of the variables, with $\mathbf{H}_1 = \mathbf{I}_3$. Now, the determinant $det(\mathbf{H}') = 0$ iff $det\left(\left(\mathbf{H}_2\mathbf{P}'_2\right)^*-\mathbf{P}'^*_1\mathbf{H}_2\right)$ is a zero polynomial. But we show that, for any $\theta \in [0,2\pi)$, there exists an assignment to the channel matrices so that $\left(\mathbf{H}_2\mathbf{P}'_2\right)^*-\mathbf{P}'^*_1\mathbf{H}_2$ is not singular. Let

\begin{equation*}
 \mathbf{H}_2 = \left[ \begin{array}{ccc}
                        0 & 0 & -e^{-2i\theta} \\
                        0 & 2 & 0 \\
                        1 & 0 & 0 \\
                       \end{array}\right].
\end{equation*}So, we have
\begin{equation*}
 det(\left(\mathbf{H}_2\mathbf{P}'_2\right)^*-\mathbf{P}'^*_1\mathbf{H}_2) = e^{-3i\theta}\left(2+e^{-i\theta}\right)\left(2+e^{i\theta}\right) \neq 0, \forall ~\theta \in [0,2\pi).
\end{equation*}Therefore, choosing $\theta$ to be anything in $[0,2\pi)$ ensures that $ \mathbf{H}'$ is full-ranked with probability 1, and this completes the proof of Theorem \ref{thm_x3_DoF}.  

\section*{Acknowledgement}
This work was partially supported by a grant from University Grants Committee of the Hong Kong Special Administrative Region, China (Project No. AoE/E-02/08). The authors would like to thank Prof. B. Sundar Rajan for his involvement in the preliminary version of a part of this work \cite{AbR_X_Ch_ISIT2014}.

%
%
 \bibliographystyle{ieeetr}
 \bibliography{References}


\end{document}